\newcommand{\al}			{\alpha}
\newcommand{\eps}			{\epsilon}
\newcommand{\tta}			{\theta}
\newcommand{\lm}			{\lambda}
\newcommand{\sgm}		{\sigma}
\newcommand{\Gm}			{\Gamma}
\newcommand{\Tta}			{\Theta}
\newcommand{\Ups}		{\Upsilon}
\newcommand{\Om}			{\Omega}
\newcommand{\mc}[1]{\mathcal{ #1}}						
\newcommand{\trm}[1]{\textrm{ #1}}						
\newcommand{\wt}[1]{\widetilde{ #1}}						
\newcommand{\wht}[1]{\widehat{ #1}}						
\newcommand{\ovl}[1]{\overline{ #1}}						
\newcommand{\ud}{\mathrm{d}} 								
\newcommand{\R}{\mathbbm{R}}								
\newcommand{\dnd}[3]{\frac{\partial^{#1} #2}{\partial #3^{#1}}}		
\newtheorem{theorem}{Theorem}[section]					
\newtheorem{lemma}[theorem]{Lemma}						
\newenvironment{proof}[1][Proof.]{\begin{trivlist}					
\item[\hskip \labelsep {\emph { #1}}]}{\end{trivlist}}
\newcommand{\qed}{\nobreak \ifvmode \relax \else			
      \ifdim\lastskip<1.5em \hskip-\lastskip							
      \hskip1.5em plus0em minus0.5em \fi \nobreak
      \vrule height0.75em width0.5em depth0.25em\fi}
\newcommand{\trian}{\nobreak \ifvmode \relax \else			
      \ifdim\lastskip<1.5em \hskip-\lastskip
      \hskip1.5em plus0em minus0.5em \fi \nobreak
      \vrule height0.75em width0.5em depth0.25em\fi}
\newcommand{\citte}[1]{~\cite{#1}}															
\def\url@leostyle{%
  \@ifundefined{selectfont}{\def\UrlFont{\sf}}{\def\UrlFont{\small\ttfamily}}}
\newcommand{\haux}{\mathcal{H}_{\!\!{\trm{aux}}}}							
\newcommand{\thaux}{{\widetilde{\mathcal{H}}_{\!\!{\trm{aux}}}}}		
\newcommand{\Ihaux}{\mathcal{H}_{c}}							 
\renewcommand{\c}{\mc{C}}												
\newcommand{\p}{\mc{P}}												
\newcommand{\bp}{\ovl{\mc{P}}}										
\newcommand{\bc}{\ovl{\mc{C}}}										
\newcommand{\aux}[2]{\left( #1, #2\right)_{\!\!\trm{aux}}}				
\newcommand{\taux}[2]{\left( #1, #2\right)_{\wt{\text{aux}}}}			
\newcommand{\Iaux}[2]{\left( #1, #2\right)_c}									
\newcommand{\Raux}[2]{\left( #1, #2\right)_{\R}}								
\newcommand{\brs}[2]{\left( #1, #2\right)_{\text{BRST}}}					
\newcommand{\brsren}[2]{\left( #1, #2\right)^{r}_{\text{BRST}}}		
\newcommand{\raqsesq}[2]{\left( #1, #2\right)_{\text{ave}}}					
\newcommand{\traqsesq}[2]{\left( #1, #2\right)_{\wt{\text{ave}}}}			
\newcommand{\ipraq}[2]{\left( #1, #2\right)_{\text{RAQ}}}
\newcommand{\Hraq}{{\mathcal{H}_{\text{RAQ}}}}
\newcommand{\cutoff}{L}
\newcommand{\dndl}[2]{\frac{\partial^{\, l} #1}{\partial #2}}			
\newcommand{\ghalt}[1]{{\text{gh}}(#1)}											
\newcommand{\Aobs}{{\mathcal{A}_\mathrm{obs}}}
\newcommand{\BbbR}{\mathbb{R}}
\newcommand{\BbbZ}{\mathbb{Z}}
\newcommand{\mcc}[1]{\CMcal{ #1}}												
\numberwithin{equation}{section}				
\title{Constraint rescaling in refined algebraic quantisation:\\
momentum constraint}
\author{Jorma Louko\thanks{jorma.louko@nottingham.ac.uk}
\ and
Eric Mart\'inez-Pascual\thanks{pmxem2@nottingham.ac.uk}
\\
\noalign{\vspace{3ex}}
\small{\it School of Mathematical Sciences,
University of Nottingham,}\\
\small{\it Nottingham NG7 2RD, UK}
\\
\noalign{\vspace{3ex}}
\small{Revised November 2011}
\\
\noalign{\vspace{1ex}}
\small{Published in J.\ Math.\ Phys.\ 
\textbf{52}, 123504 (2011)}
\\ 
{\small\tt \href{http://link.aip.org/link/?jmp/52/123504}{http://link.aip.org/link/?jmp/52/123504}}
}
\date{}
\begin{document}

\maketitle

\begin{abstract}
We investigate refined algebraic quantisation within 
a family of classically equivalent constrained Hamiltonian systems 
that are related to each other by rescaling a momentum-type constraint. 
The quantum constraint is implemented by a rigging map that is 
motivated by group averaging but has a resolution finer than 
what can be peeled off from the formally divergent contributions 
to the averaging integral. 
Three cases emerge, depending on the asymptotics of the rescaling function: 
(i) quantisation is equivalent to that with identity scaling; 
(ii) quantisation fails, 
owing to nonexistence of self-adjoint extensions of the 
constraint operator; 
(iii) a quantisation ambiguity arises from the self-adjoint 
extension of the constraint operator, and the resolution of 
this purely quantum mechanical ambiguity determines the 
superselection structure of the physical Hilbert space. 
Prospects of generalising the analysis to systems with 
several constraints are discussed.
\end{abstract}

\vspace{4ex}

\noindent 
{\scriptsize Copyright (2011) American Institute of Physics. 
This article may be downloaded for personal use only. 
Any other use requires prior permission of the 
author and the American Institute of Physics.}

\newpage

\section{Introduction}
\label{sec:intro} 

In a classical Hamiltonian system, a gauge symmetry is generated by
constraint functions known as \emph{first class constraints:\/}
constraints whose Poisson brackets with each other and with the
Hamiltonian are linear combinations of the constraints themselves. In
the Dirac-Bergmann quantisation scheme the constraint functions are
promoted into quantum constraint operators, and the physical quantum
states are required to be annihilated by the quantum
constraints\citte{dirbk-lec,sunbk,henbk}.

To find physical quantum states, one may wish to start from a state
that is not necessarily annihilated by the quantum constraints and
average this state over the action generated by the quantum
constraints\citte{hig91a,hig91b}. When the quantum constraints
generate the action of a genuine Lie group, this group averaging
proposal can be given a precise implementation in the framework known
in physics as refined algebraic quantisation
(RAQ)\citte{ash95,emb98,mar00,Giulini:1999kc} and in mathematics as
Rieffel induction\citte{Rieffel1974176}, with results on both
uniqueness and generality of the resulting quantum
theory\citte{giu99b,giu99a}. Case studies of specific quantum
mechanical systems can be found
in\citte{gom98,Louko:1999tj,Louko:2003cn,Louko:2004zq,lou05},
applications to de~Sitter invariant quantum field theory are
considered in\citte{hig91a,hig91b,Marolf:2008it,Marolf:2008hg} and
applications to loop quantum gravity are considered
in\citte{Dittrich:2004bq,Kaminski:2009qb}.

A~Lie group action generated by the quantum constraints is however a
very special case: it can be expected to occur only when the Poisson
brackets of the classical constraint functions form a Lie algebra,
that is, close with structure coefficients that are constants on the
phase space. In many systems of interest, including general relativity
in both metric and connection formulations\citte{Ashtekar:1991hf}, the
structure coefficients are nonconstant functions on the phase
space. Further, given a system with at least two constraints and
constant structure coefficients, redefining the constraints by an
invertible linear map that is not constant on the phase space yields a
classically equivalent system that can be arranged to have nonconstant
structure functions. The distinction between structure constants
(known as a \emph{closed\/} gauge algebra) and nonconstant structure
functions (known as an \emph{open\/} gauge algebra) is hence not
intrinsic to the true physical degrees of freedom but depends also on
how the generators of the gauge transformations are
chosen\citte{kuc86a,kuc86b,haj90,mcm89a,mcm89b}. These considerations
show that there would be considerable interest to extend the group
averaging method to systems with open algebras.

A~proposal for extending group averaging to open gauge algebras 
has been given by Shvedov\citte{shv02}, using the 
Becchi-Rouet-Stora-Tyutin (BRST) formalism\citte{bec76,kug79,henrep,govbk} 
and building on the previous 
work in\citte{raz90,Nirov:1994mj,%
Marnelius:1990eq,Marnelius:1993ba,bat95,mar95,dut99}, 
in particular on the Batalin-Marnelius inner product\citte{bat95}. 
When the structure functions are constants, 
Shvedov's proposal duly reduces to averaging over a 
Lie group in the measure adopted in\citte{giu99a}. 
To recover a full quantum theory, however, an averaging formula must
be supplanted with additional structure, including the state space on
which the averaging acts and the sense in which the averaging
converges. These issues have proven quite delicate already in the Lie
group context when the group is not compact, despite the control
provided by the Giulini-Marolf uniqueness theorem\citte{giu99a}; for
example, the averaged states can turn out to have negative norm
squared\citte{Louko:2003cn}.

In this paper we address group averaging in refined algebraic
quantisation in a class of systems related by rescaling a classical
constraint\citte{christo90}.  We focus on a system with a single
constraint, so that the gauge algebra is trivially closed 
regardless the scaling of the constraint.   
To avoid built-in topological complications in the classical
theory, we take the phase space to be $T^*\BbbR^{2} \simeq \BbbR^4$
and the constraint to be linear in one of the momenta, but we allow
this momentum to be scaled by a nowhere-vanishing function of the
coordinates. The classical reduced phase space is then just $T^*\BbbR
\simeq \BbbR^2$, obtained by dropping the canonical pair whose
momentum appears in the constraint.  The main issue that remains in
quantisation is then how to promote the classical constraint into an
operator in terms of which the quantum gauge transformations and the
averaging over these transformations can be defined.

We shall see that once the auxiliary Hilbert space structure is
specified, the options to define the constraint operator depend on the
asymptotics of the scaling function in the classical constraint. Three
cases emerge:
\begin{itemize}
\item[(i)] The constraint operator is essentially self-adjoint, and
the quantisation is equivalent to the group averaging that arises when
the scaling function is the constant function~$1$.
\item[(ii)]
The constraint operator has no self-adjoint extensions, 
and we are unable to extract a notion of quantum gauge transformations, 
let alone a definition of averaging over them. 
No quantum theory is recovered. 
\item[(iii)] There is an infinite quantisation ambiguity, arising from
a choice in the self-adjoint extension of the constraint
operator. Within a subclass of extensions parametrised by one smooth
function of one variable, the superselection structure of the physical
Hilbert space depends strongly on the choice of the extension, but the
quantum theory is insensitive to the residual freedom in the scaling
function.
\end{itemize}

The superselection sectors that emerge in case (iii) resemble 
closely those in refined algebraic quantisation of the 
Ashtekar-Horowitz-Boulware model\citte{lou05}. However, 
whereas with the Ashtekar-Horowitz-Boulware model these 
sectors are determined by the potential term in the classical constraint, 
here the sectors are determined solely by a quantisation ambiguity. 

We begin by introducing the classical system in
section~\ref{the-model}. Section \ref{generic-M} specifies the
auxiliary structure for refined algebraic quantisation, establishing
the conditions under which the scaling functions belong to cases
(i)--(iii). Case (i) is briefly addressed in section~\ref{case-i}.
The main content of the paper is in the analysis of case (iii) in
section~\ref{case-iii}. Section \ref{remarks} presents a summary and
concluding remarks. Appendix \ref{brst-analysis} reviews the
relationship of group averaging and the BRST inner product for a
system with a single constraint. The proofs of certain technical
results are deferred to appendices \ref{app:rigging}
and~\ref{lemmas-app}.

We set $\hbar=1$.  Complex conjugate is denoted by overline, except in
appendix \ref{brst-analysis} where it is denoted by~${}^*$.  In
asymptotic analyses, $O(u)$ is such that $u^{-1}O(u)$ remains bounded
as $u\to0$, $o(u)$ is such that $u^{-1}o(u) \to 0$ as $u\to0$ and
$o(1) \to 0$ as $u\to0$\citte{wongbook}.

\section{Classical system: one momentum-type constraint}
\label{the-model}

We consider a system with configuration space 
$\BbbR^2 = \{(\tta,x)\}$ and phase space 
$\Gm = T^*\BbbR^2 = \{(\tta,x,p_{\tta},p_x)\} \simeq \BbbR^4$. 
The system has one constraint, 
\begin{align}
\label{rc}
\phi := M(\tta,x)p_{\tta}, 
\end{align}
where the real-valued function $M$ is smooth and nowhere vanishing.
We may assume without loss of generality that $M$ is positive.  We
assume that there is no true Hamiltonian, although inclusion of a true
Hamiltonian that only depends on $x$ and $p_x$ would be
straightforward. 

The constraint hypersurface $\phi=0$ is 
$\Gm_{c}=\{(\tta, x,0,p_{x})\} \simeq \BbbR^3$. 
The generator of gauge transformations on $\Gm_{c}$
is the restriction of the 
Hamiltonian vector field of~$\phi$, 
\begin{align}
\label{eq:Xplus-def}
X^+ := M(\tta,x)\partial_\theta . 
\end{align}
As $X^+$ is nowhere vanishing, 
the constraint is regular in 
the sense of\citte{henbk,mis03}. 
The integral curves of $X^+$ 
have constant $x$ and $p_x$
and they connect any two given values of~$\theta$. 
The reduced phase space is hence 
$\Gm_{\text{red}}=\{(x,p_{x})\} \simeq \BbbR^2$. 

If we wish to view the gauge transformations
as maps on~$\Gm_{c}$, 
rather than just as maps of individual 
initial points in~$\Gm_{c}$, a subtlety arises. 
The gauge transformation with the 
(finite) parameter $\lm$ is the 
exponential map of~$\lm X^+$, $\exp(\lm X^+)$. 
If $M$ satisfies 
\begin{align}
\label{eq:completeness-cond}
\int_{-\infty}^0 \frac{\ud \tta}{M(\tta,x)} 
= \infty = 
\int_0^\infty \frac{\ud \tta}{M(\tta,x)}  
\end{align}
for all~$x$, then $X^+$ is a complete vector field, 
and the family $\{\exp(\lm X^+) \mid \lm\in\BbbR\}$ is a one-parameter 
group of diffeomorphisms $\Gm_{c} \to \Gm_{c}$\citte{leebk}. 
If \eqref{eq:completeness-cond} does not hold for all~$x$, 
then $X^+$ is incomplete. 
It is still true that the action of 
$\exp(\lm X^+)$ 
on any \emph{given\/} 
initial point in $\Gm_{c}$  
is well defined for sufficiently small~$|\lm|$; however, 
there are no values of $\lm\ne0$ for which both of $\exp(\pm\lm X^+)$ 
are defined as maps $\Gm_{c} \to \Gm_{c}$, 
since at least one of them will try to move points past the infinity. 
It is this classical subtlety whose quantum mechanical counterpart 
will be at the heart of our quantisation results. 

Finally, note that when $M$ is the constant function~$1$, 
we have $\phi=p_\tta$ and $X^+ = \partial_\tta$, 
and the gauge transformation 
$\exp(\lm X^+): \Gm_{c} \to \Gm_{c}$ is just the translation 
$(\tta,x,p_x) \mapsto (\tta+\lm,x,p_x)$. 
Other choices for $M$ amount to rescaling the constraint 
of this prototype system by a positive function that may depend on 
both the gauge variable $\theta$ and the non-gauge variable~$x$. 
We refer to $M$ as the scaling function.

\section{Refined algebraic quantisation: action of the gauge group}
\label{generic-M}

We wish to quantise the system in the refined algebraic quantisation
(RAQ) framework as reviewed in\citte{mar00}. In this section we
specify the auxiliary structure and examine conditions under which the
quantum constraint generates the action of a unitary group on the
auxiliary Hilbert space. Textbook expositions of the requisite theory
of self-adjoint operators are given
in\citte{reebk2,blabk} and a pedagogical introduction can be
found in\citte{bonneau-etal}. 

We take the auxiliary Hilbert space to be square integrable functions
on the classical configuration space 
$\BbbR^2 = \{(\tta,x)\}$, 
$\haux:=L^{2}(\BbbR^{2},\ud\tta \, \ud x)$. 
The inner product in $\haux$ reads  
\begin{align}
\label{aux-ip}
\aux{\psi}{\chi}:=\int_{\BbbR^{2}} \ud\tta \, \ud x\; 
\ovl{\psi(\tta,x)} \, \chi(\tta,x) , 
\end{align}
where the overline denotes complex conjugation. 

We promote the classical constraint 
$\phi$ \eqref{rc} into a quantum constraint 
by the substitution $p_\tta \mapsto -i\partial_\tta$ and a 
symmetric ordering, with the result
\begin{align}
\label{eq:phihat-def-bos}
\wht{\phi} 
& :=-i\left(M\partial_{\tta}
+\tfrac{1}{2}(\partial_{\tta}M)\right) . 
\end{align}
We wish to obtain a family of operators
$\{U(\lm)\}$ by exponentiating~$\wht{\phi}$,
\begin{align}
\label{eq:U-formaldef}
{U}(\lm) &:= \exp\bigl(i\lm\wht{\phi}\,\bigr) , 
\end{align}
and to find an inner product on the physical
Hilbert space by a suitable interpretation of the 
sesquilinear form 
\begin{align}
\label{eq:raqsesq}
\raqsesq {\psi}{\chi} 
&:= 
\int \ud\lm 
\aux{\psi}{{U}(\lm)\chi} . 
\end{align}
In this section we consider~\eqref{eq:U-formaldef}. 

The operator $\wht{\phi}$ \eqref{eq:phihat-def-bos} is symmetric on
the dense linear subspace of smooth functions of compact support
in~$\haux$. If $\wht{\phi}$ has self-adjoint extensions on~$\haux$, a
choice of the self-adjoint extension in \eqref{eq:U-formaldef} defines
$\{{U}(\lm) \mid \lm\in\BbbR\}$ as a one-parameter group of unitary
operators, and we can seek to implement 
\eqref{eq:raqsesq} as the group averaging sesquilinear form
in~RAQ\null. We hence need to analyse the self-adjoint extensions
of~$\wht{\phi}$.

The existence of self-adjoint extensions of $\wht{\phi}$ 
is determined by
the deficiency indices of~$\wht{\phi}$, 
that is, the dimensions 
of the linear subspaces of $\haux$ satisfying 
$\wht{\phi}\psi=\pm i \psi$\citte{reebk2,blabk}. 
The solutions to the differential equation 
$\wht{\phi}\psi=\pm i \psi$ are 
\begin{align}
\label{eq:psipm}
\psi_{\pm}(\tta,x)=\dfrac{F_{\pm}(x)}{\sqrt{M(\tta,x)}}
\exp\bigl[\mp \sgm_{x}(\tta)\bigr],
\end{align}
where 
\begin{align}
\label{sgm-of-theta}
\sgm_{x}(\tta):=\int_{\, 0}^{\, \tta}\frac{\ud\tta'}{M(\tta',x)} 
\end{align}
and the complex-valued functions $F_{\pm}$ are arbitrary. 
There are three qualitatively different cases, 
depending on the asymptotics of $\sgm_{x}(\tta)$ as $\theta \to \pm\infty$. 


\emph{Type I scaling functions.}\ \ 
Suppose that 
\begin{align}
\label{eq:sigma-inftyconditions}
\sgm_{x}(\tta) \to \pm\infty 
\ \ \text{as}\ \ 
\theta\to\pm\infty 
\ \ \text{for a.e.~$x$}, 
\end{align} 
where ``a.e.'' stands for almost everywhere 
in the Lebesgue measure on~$\BbbR$. 
Then every nonzero $\psi_{\pm}$ \eqref{eq:psipm} has infinite norm, 
for $\psi_{+}$ because of the behaviour at $\tta\to-\infty$ and for 
$\psi_{-}$ because of the behaviour at $\tta\to\infty$. 
The deficiency indices are $(0,0)$ and $\wht{\phi}$ 
is essentially self-adjoint. The operator ${U}(\lm)$ is unitary, 
and it acts on the wave functions by the 
exponential map of the vector field $X^+$~\eqref{eq:Xplus-def}, 
where the wave functions are understood as half-densities (see for example 
Appendix C in\citte{Ashtekar:1991hf}). 
Explicitly, we have 
\begin{align}
\label{formal-Uaction}
\bigl({{U}}(\lm)\psi\bigr)
(\tta,x)
=
\frac{\sqrt{M\bigl(\sgm_{x}^{-1}(\sgm_{x}(\tta)+\lm),x\bigr)}}
{\sqrt{M(\tta,x)}}\,\psi\bigl(\sgm_{x}^{-1}(\sgm_{x}(\tta)+\lm),x\bigr), 
\end{align}
where the formula is well-defined for all $x$ 
except the set of 
measure zero (if non-empty) where 
\eqref{eq:sigma-inftyconditions} does not hold. 
The group multiplication law in the one-parameter group 
$\{{U}(\lm) \mid \lm\in\BbbR\} \simeq \BbbR$ is addition in~$\lm$. 
In the special case $M(\tta,x)=1$, we recover 
the group of translations in~$\tta$, 
$\bigl({{U}}(\lm)\psi\bigr)
(\tta,x) = \psi(\tta+\lm,x)$. 


\emph{Type II scaling functions.}\ \ 
Suppose that \eqref{eq:sigma-inftyconditions} holds 
either with the upper signs or with the lower signs but not both. 
If \eqref{eq:sigma-inftyconditions} holds for the upper signs, 
then every nonzero $\psi_{-}$ \eqref{eq:psipm} has 
again infinite norm; however, any $F_{+} \in L^2(\BbbR, \ud x)$ 
whose support is in the set 
where \eqref{eq:sigma-inftyconditions} 
with the lower signs fails 
will give a square integrable~$\psi_{+}$. 
The deficiency indices are hence $(\infty,0)$. Similarly, 
if \eqref{eq:sigma-inftyconditions} holds for the lower signs, 
the deficiency indices are $(0,\infty)$. 
$\wht{\phi}$~has no self-adjoint 
extensions in either case, and 
\eqref{eq:U-formaldef} does 
not provide a definition of~${U}(\lm)$. 
At the level of formula~\eqref{formal-Uaction}, 
the problem is that $\sgm_{x}^{-1}$ is 
not defined even for a.e.~$x$. 


\emph{Type III scaling functions.}\ \ 
Suppose that \eqref{eq:sigma-inftyconditions} holds 
with neither upper nor lower signs. 
Reasoning as with Type II above shows that the 
deficiency indices are $(\infty,\infty)$. 
$\wht{\phi}$ has an infinity of self-adjoint extensions, 
and each of them defines 
$\{{U}(\lm) \mid \lm\in\BbbR\}$
as a one-parameter 
group of unitary operators. 
Formula \eqref{formal-Uaction} has again 
a problem in that $\sgm_{x}^{-1}$ is not defined, 
but the self-adjoint extension of $\wht{\phi}$ 
provides a rule by which the probability that 
is pushed beyond $\theta=\pm\infty$ by 
\eqref{formal-Uaction}
will re-emerge from  $\theta=\mp\infty$. 
The group $\{{U}(\lm) \mid \lm\in\BbbR\}$ 
may be isomorphic 
to either $\BbbR$ or~$\text{U}(1)$. 

We are hence able to proceed only with Types I and~III\null. 
In sections \ref{case-i} and \ref{case-iii} we address the integral
\eqref{eq:raqsesq} for these two types.

\section{RAQ for Type I scaling functions}
\label{case-i}

For Type I scaling functions, the
multiplication law in the group 
$\{{U}(\lm) \mid \lm\in\BbbR\} \simeq \BbbR$ 
is addition in~$\lm$. 
We hence take the range of integration in 
\eqref{eq:raqsesq} to be the full real axis. 

It is convenient to map $\haux$ into 
$\thaux := L^{2}(\BbbR^{2},\ud\Tta \, \ud x)$ 
by the Hilbert space isomorphism 
\begin{align}
\haux & \to \thaux , 
\notag
\\
\psi & \mapsto \wt{\psi} , 
\notag
\\
\wt{\psi}(\Tta,x) & := 
\sqrt{M\bigl(\sgm_{x}^{-1}(\Tta),x\bigr)} 
\, \psi\bigl(\sgm_{x}^{-1}(\Tta),x\bigr) , 
\end{align}
where the last line is well defined for a.e.~$x$.  
Working in $\thaux$, the auxiliary inner product
reads 
\begin{align}
\label{aux-ipt}
\taux{\wt\psi}{\wt\chi}:=\int_{\BbbR^{2}}  \ud \Tta \, \ud x \; 
\ovl{{\wt\psi}(\Tta,x)} \, \wt\chi(\Tta,x) , 
\end{align}
and the 
group averaging sesquilinear form takes the form 
\begin{align}
\label{eq:raqsesqt}
\traqsesq{\wt\psi}{\wt\chi} 
&:= 
\int_{-\infty}^\infty \ud\lm
\, \taux{\wt\psi}{{\wt{U}}(\lm)\wt\chi} , 
\end{align}
where 
\begin{align}
\label{tUaction}
\Bigl({{\wt{U}}}(\lm)\wt\psi \, \Bigr)
(\Tta,x)
=
\wt\psi\bigl(\Tta+\lm,x\bigr). 
\end{align}
The system has thus been mapped to that in which $M$ is the constant
function~$1$.

RAQ in $\thaux$ can now be carried out as
for the closely related system discussed in Section IIB
of\citte{ash95}. We can choose smooth functions of compact support on
$\BbbR^2 = \{(\Tta,x)\}$ as the dense linear subspace of $\thaux$ on
which \eqref{eq:raqsesqt} is well defined. The averaging projects out
the $\Tta$-dependence of the wave functions, and the physical Hilbert
space is $L^2(\BbbR, \ud x)$. The technical steps are 
essentially identical to those
in\citte{ash95} and we will not repeat them here.

\section{RAQ for Type III scaling functions}
\label{case-iii} 

For Type III scaling functions, an attempt to classify the
self-adjoint extensions of $\wht{\phi}$ would face two
challenges. First, the sets in which the conditions
\eqref{eq:sigma-inftyconditions} fail for the upper and lower signs
can be arbitrary sets of positive measure. Second, even after these
sets are fixed, the deficiency indidices are $(\infty,\infty)$, and
the self-adjoint extensions of $\wht{\phi}$ comprise only a subset of
all maximal extensions of~$\wht{\phi}$\citte{blabk}.  We shall
consider a subfamily of self-adjoint extensions of $\wht{\phi}$ that
is small enough to allow the action of the gauge group to be written
down in an explicit form, yet broad enough to contain situations where
rigging maps of interesting structure can be extracted from the group
averaging formula~\eqref{eq:raqsesq}.

\subsection{Subfamily of classical rescalings and quantum 
boundary conditions}

We make two simplifying assumptions, one concerning the classical
rescaling function and the other concerning the quantum
self-adjointness boundary conditions.

First, we assume that \eqref{eq:sigma-inftyconditions}
fails for all $x$ for both signs, so that the formula 
\begin{align}
N(x) := 2\pi
\left(\int_{-\infty}^\infty 
\frac{\ud\tta'}{M(\tta',x)} \right)^{-1}
\end{align}
defines a function $N: \BbbR \to \BbbR_+$.
It follows that we can map 
$\haux$ to 
$\Ihaux := L^{2}(I\times \BbbR ,\ud\omega \, \ud x)$, 
where $I = [0,2\pi]$, 
by the Hilbert space isomorphism 
\begin{align}
\haux & \to \Ihaux , 
\notag
\\
\psi & \mapsto \psi_c , 
\notag
\\
\psi_c (\omega,x) & := 
\sqrt{\frac{M \bigl({\tilde\sgm}_{x}^{-1}(\omega/N(x)),x\bigr)}{N(x)}} 
\, \psi\bigl({\tilde\sgm}_{x}^{-1}(\omega/N(x)),x\bigr) , 
\end{align}
where 
\begin{align}
\label{tsgm-of-theta}
\tilde\sgm_{x}(\tta):=
\int_{-\infty}^{\, \tta}\frac{\ud\tta'}{M(\tta',x)} . 
\end{align}
The auxiliary inner product in $\Ihaux$ reads 
\begin{align}
\Iaux{\psi_c}{\chi_c}:=\int_{I\times \BbbR}  \ud\omega \, \ud x \; 
\ovl{{\psi_c}(\omega,x)} \, \chi_c(\omega,x) , 
\end{align}
and $\wht\phi$ \eqref{eq:phihat-def}
is mapped to 
\begin{align}
\label{eq:phihatc-def}
\wht{\phi}_c 
& := -i N(x) \, \partial_{\omega} . 
\end{align}
We work from now on in~$\Ihaux$, dropping the 
subscript $c$ from the wave functions. 

Second, we consider only those self-adjoint extensions of 
$\wht{\phi}_c$ \eqref{eq:phihatc-def} where the 
boundary conditions at $\omega=0$ and $\omega=2\pi$ 
do not couple different values of~$x$. 
The self-adjointness analysis then reduces 
to that of the momentum operator on an interval\citte{reebk2}, 
independently at each~$x$. 
The domains of self-adjointness are 
\begin{align}
\label{alfa-sadd}
D_{\al}:=\left\lbrace\psi,\partial_{\omega}\psi\in\Ihaux \mid \psi(\cdot,x) 
\in\text{ac}(0,2\pi)\  
\text{and}\ \psi(0,x)=e^{i2\pi\al(x)}\psi(2\pi,x),\ \forall x\right\rbrace,
\end{align}
where $\text{ac}(0,2\pi)$ denotes absolutely continuous functions of
$\omega$ and the function $\al:\BbbR\to\BbbR$ specifies the phase shift
between $\omega=0$ and $\omega=2\pi$ at each~$x$.

Under these assumptions, the remaining freedom in the 
classical rescaling function $M: \BbbR^2 \to \BbbR^2$ is 
encoded in the function $N: \BbbR \to \BbbR_+$, 
while the remaining freedom in the 
self-adjoint extension
of $\wht{\phi}_c$ \eqref{eq:phihatc-def} 
is encoded in the function $\al:\BbbR\to\BbbR$. 
Note that no smoothness assumptions about either
function are needed at this stage. 

The action of ${U}_c(\lm) := \exp\bigl(i\lm\wht{\phi}_c\,\bigr)$
takes now a simple form in a Fourier decomposition adapted
to~$D_{\al}$. We write each $\psi\in\Ihaux$ in the unique decomposition
\begin{align}
\label{eq:psi-decomp}
\psi(\omega,x) = 
\frac{1}{\sqrt{2\pi}}\sum_{n\in\BbbZ} 
e^{i\left[n-\al(x)\right]\omega} \, \psi_n (x) , 
\end{align}
where each $\psi_n$ is in $L^2(\BbbR, \ud x)$. It follows that 
\begin{align}
\label{eq:Iaux-decomp}
\Iaux{\psi}{\chi} = \sum_{n\in\BbbZ}  \Raux{\psi_n}{\chi_n}, 
\end{align}
where $\Raux{\cdot}{\cdot}$ is the inner product in $L^2(\BbbR, \ud x)$. 
The action of 
${U}_c(\lm)$ reads 
\begin{align}
\label{eq:Uc-on-psi}
\bigl({U}_c(\lm)\psi\bigr)_n (x) = 
e^{i R_n(x) \lm} \, \psi_n(x) , 
\end{align}
where for each $n\in\BbbZ$ the function $R_n: \BbbR \to \BbbR$ is defined by 
\begin{align}
\label{eq:Rq-def}
R_n(x) = \left[n-\al(x)\right] N(x) . 
\end{align}


\subsection{Test space, observables and rigging map candidates}

Let $\Phi$ be the dense linear subspace of $\Ihaux$ where 
the states have the form 
\eqref{eq:psi-decomp} such that 
every $\psi_n$ is smooth with 
compact support and only finitely many of them are 
nonzero for each $\psi\in\Phi$. 
From \eqref{eq:Uc-on-psi} we see that $\Phi$ 
is invariant under ${U}_c(\lm)$ for each~$\lm$. 
We adopt $\Phi$ as the RAQ test space  
of `sufficiently well-behaved' auxiliary states. 

Given $\Ihaux$, $\Phi$ and ${U}_c(\lm)$, the 
RAQ observables are operators $A$ on $\Ihaux$ such that the
domains of $A$ and $A^\dag$ include~$\Phi$,
$A$ and $A^\dag$ map $\Phi$ to itself
and $A$ commutes with ${U}_c(\lm)$ on $\Phi$ for all~$\lm$. 
We denote the algebra of the observables by~$\Aobs$.  

The final ingredient in RAQ is to specify the rigging map 
$\eta: \Phi \to \Phi^*$, 
where the star denotes the algebraic
dual, topologised by pointwise convergence. 
$\eta$~is antilinear, 
it must be real and positive in the sense that the properties 
\begin{subequations} 
\begin{align}
& \eta(f)[g] = \overline{\eta(g)[f]} \ , 
\\
& \eta(f)[f] \ge 0 \ , 
\label{eq:positivity}
\end{align}
\end{subequations}
hold for all $f, g \in \Phi$, 
and states in the image of $\eta$ must 
be invariant under the dual action of~${U}_c(\lm)$. 
Finally, $\eta$ must
intertwine with the representations of 
$\Aobs$ on $\Phi$ and $\Phi^*$
in the sense that 
\begin{align}
\label{eq:eta-intertwining-matrix}
\eta(A f) [g]
= 
\eta(f) [A^\dag g]
\ , 
\end{align}
for all $A\in\Aobs$ and $f, \, g \in \Phi$, 
where 
the left-hand side denotes the
dual action of $\eta(A f)\in\Phi^*$ 
on $g\in\Phi$ and 
the right-hand side denotes the dual action of 
$\eta(f) \in\Phi^*$
on $A^\dag g \in \Phi$. 
The physical Hilbert space 
$\Hraq$ is then the completion of the image of 
$\eta$ in the inner product 
\begin{align}
\ipraq{\eta(g)}{\vphantom{\bigm{|}}\eta(f)}
:= \eta(f)[g]
\ , 
\label{phys-ip}
\end{align}
and the properties of $\eta$ and $\Aobs$ imply that $\eta$ induces an
antilinear representation of $\Aobs$ on~$\Hraq$, 
with the image of
$\eta$ as the dense domain\citte{ash95,giu99a}. 

We seek a rigging map in the form 
\begin{align}
\eta(f)[g]
& = 
\lim_{\cutoff\to\infty} 
\frac{1}{\rho(\cutoff)}
\int_{-\cutoff}^{\cutoff}
\ud\lm
\Iaux{f}{\vphantom{\bigm{|}}{U}_c(\lm)g} 
\notag
\\
&= 
\lim_{\cutoff\to\infty}
\frac{1}{\rho(\cutoff)}
\sum_{n\in\BbbZ} 
\int_{-\infty}^\infty
\ud x \, \overline{f_n(x)}{g_n(x)} 
\int_{-\cutoff}^{\cutoff} 
\ud\lm
\, e^{i R_n(x) \lm} , 
\label{eq:eta-ga}
\end{align}
where the last expression follows from 
\eqref{eq:psi-decomp} 
and \eqref{eq:Uc-on-psi} 
after interchanging sums and integrals, justified by the 
assumptions about~$\Phi$. 
The function $\rho: \BbbR_+ \to \BbbR_+$ 
has been included in order to seek 
a finite answer in cases where the 
limit would otherwise diverge. 

The existence of the limit in \eqref{eq:eta-ga} 
depends delicately on the 
zero sets and the stationary point sets of the 
functions~$R_n$. 
In subsections \ref{sec:genintervals} and 
\ref{sec:gensmooth}
we introduce conditions 
that make the limit controllable.

\subsection{\texorpdfstring{$N$}{} and 
\texorpdfstring{$\al$}{} smooth, \texorpdfstring{$\al$}{} 
with integer-valued intervals}
\label{sec:genintervals}

We assume that $\al$ and $N$ are smooth. 
What will play a central role are the 
integer value sets of $\al$
and the stationary point 
sets of the functions 
$\left\{ R_n \mid n\in\BbbZ \right\}$. 
To control the stationary point sets, 
we assume that $R_n$ satisfy the following technical condition: 
\begin{itemize}
\item[(a)]
The stationary point set of each $R_n$
is either empty 
or the union of 
at most countably many isolated points, at most countably many 
closed intervals and at most two closed half-lines, 
such that any compact subset of $\BbbR$ contains at 
most finitely many of the isolated points 
and at most finitely many of the finite intervals. 
\end{itemize}
To control the integer value set of $\al$, 
we assume in this subsection the following condition: 
\begin{itemize}
\item[(b)]
$\al$ takes an integer value on at least one interval. 
\end{itemize}
It follows from (b) that at least one $R_n$ takes 
the value zero on an interval. 
Note that (a) and (b) include the special case 
where $\al$ takes an integer value everywhere, 
and the very special case where this integer value is zero. 

The group averaging formula \eqref{eq:eta-ga} 
takes the form 
\begin{align}
\eta(f)[g]
&= 
\lim_{\cutoff\to\infty}
\frac{2L}{\rho(\cutoff)}
\sum_{n\in\BbbZ} 
\left(
\int_{J_n}
\ud x \, \overline{f_n(x)}{g_n(x)} 
+ 
\int_{\BbbR \setminus J_n}
\ud x \, \overline{f_n(x)}{g_n(x)} \, 
\frac{\sin\bigl[\cutoff R_n(x)\bigr]}
{LR_n(x)} 
\right) , 
\label{eq:eta-ga-infty}
\end{align}
where $J_n \subset \BbbR$ is the union of all open 
intervals contained in the zero set of~$R_n$, 
that is, in the solution set of $\al(x) = n$. 
Setting $\rho(\cutoff) = 2\cutoff$, 
the second term in \eqref{eq:eta-ga-infty} vanishes 
by dominated convergence, 
and from the first term we obtain the map 
$\eta_{\infty}: \Phi \to \Phi^*$, 
\begin{align}
\label{eq:etainfty-def}
\bigl(\eta_\infty ( f ) \bigr)[g]
=  
\sum_{n\in\BbbZ} 
\int_{J_n}
\ud x \, 
\overline{{f}_n (x)} \, {g}_n (x)
\  . 
\end{align}
We have the following theorem. 
\begin{theorem}
\label{thm:nongeneric}
$\eta_{\infty}$ is a rigging map, with a nontrivial image. 
\end{theorem}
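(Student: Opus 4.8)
The plan is to verify directly that $\eta_\infty$ defined in \eqref{eq:etainfty-def} satisfies all the defining properties of a rigging map listed in the excerpt, and then to exhibit a nonzero element in its image.

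\textbf{Reality, positivity and well-definedness.} First I would check that $\eta_\infty(f)$ indeed lies in $\Phi^*$ for each $f\in\Phi$: since only finitely many $f_n$ are nonzero and each is smooth with compact support, the sum in \eqref{eq:etainfty-def} is finite and each integral over $J_n$ is absolutely convergent, so $\eta_\infty(f)$ is a well-defined linear functional on $\Phi$, antilinear in $f$. Reality, $\eta_\infty(f)[g]=\overline{\eta_\infty(g)[f]}$, is immediate from complex conjugating the integrand. For positivity, setting $g=f$ gives $\eta_\infty(f)[f]=\sum_n\int_{J_n}|f_n(x)|^2\,\ud x\ge 0$, which is \eqref{eq:positivity}.

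\textbf{Invariance under the dual action of $U_c(\lm)$.} Here I would use \eqref{eq:Uc-on-psi}: the dual action sends $\eta_\infty(f)$ to the functional $g\mapsto\eta_\infty(f)[U_c(-\lm)g]$ (or $\eta_\infty(U_c(\lm)f)[g]$, depending on convention), which replaces $g_n(x)$ by $e^{-iR_n(x)\lm}g_n(x)$. The key point is that on $J_n$ one has $\al(x)=n$, hence $R_n(x)=[n-\al(x)]N(x)=0$ identically, so the phase factor is $1$ on the region of integration and the functional is unchanged. This is precisely why only the \emph{open intervals} in the zero set of $R_n$ contribute: on an interval, $R_n$ vanishes identically, not just pointwise.

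\textbf{Intertwining with $\Aobs$.} For \eqref{eq:eta-intertwining-matrix} I would argue that any observable $A\in\Aobs$, commuting with all $U_c(\lm)$ on $\Phi$, must act block-diagonally in the decomposition \eqref{eq:psi-decomp} in a manner compatible with the fibrewise structure; schematically $A$ preserves each $L^2$-sector labelled by $n$ and its restriction to the region $J_n$ depends only on data over $J_n$. Then $\eta_\infty(Af)[g]=\eta_\infty(f)[A^\dag g]$ follows by moving $A$ across the $L^2(\BbbR,\ud x)$ inner product restricted to $J_n$, using that $A$ and $A^\dag$ are formal adjoints there. I expect this step — pinning down enough structure of a general observable to justify the manipulation — to be the main obstacle, and I anticipate the paper either restricts attention to a convenient generating subalgebra of $\Aobs$ or invokes the commutation with $U_c(\lm)$ together with the spectral decomposition of the $R_n$ to force the block structure. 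The group-averaging origin of $\eta_\infty$ as the $L\to\infty$ limit in \eqref{eq:eta-ga-infty} also gives a soft argument: the intertwining is inherited from the ${U}_c$-invariance of the regulated integrals before the limit.

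\textbf{Nontriviality of the image.} Finally, by hypothesis (b) there is at least one $n$ for which $\al$ takes the value $n$ on an interval, so $J_n$ has positive measure. Choosing $f\in\Phi$ with $f_n$ smooth, compactly supported inside $J_n$ and not identically zero, and all other components zero, gives $\eta_\infty(f)[f]=\int_{J_n}|f_n|^2\,\ud x>0$, so $\eta_\infty(f)\ne 0$. Hence the physical inner product \eqref{phys-ip} is not identically zero and $\Hraq$ is nontrivial, completing the proof.
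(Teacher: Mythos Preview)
Your treatment of well-definedness, reality, positivity, gauge invariance, and nontriviality is correct and matches the paper, which records these as immediate.

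The genuine gap is in the intertwining step. Your claim that any $A\in\Aobs$ ``preserves each $L^2$-sector labelled by $n$'' is false in general: commutation with all $U_c(\lm)$ only forces $A$ to respect the joint spectral structure of the multiplication operators $R_n(x)$, and since $R_m(y)=R_n(x)$ can occur for $m\ne n$ (in particular $R_m\equiv 0$ on $J_m$ and $R_n\equiv 0$ on $J_n$), observables generically mix the $n$-sectors. Your ``soft argument'' via the regulated averaging does not close the gap either: for finite $\cutoff$ the truncated integral $\int_{-\cutoff}^{\cutoff}\ud\lm\,\Iaux{f}{U_c(\lm)g}$ fails to intertwine with $\Aobs$, and passing the limit through $A$ is precisely the point at issue.

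The paper's actual argument (Appendix~\ref{app:rigging}) is different and does not attempt to characterise~$\Aobs$. From unitarity and the commutation one has $\Iaux{U_c(-\lm)f}{A^\dag g}=\Iaux{Af}{U_c(\lm)g}$ for every $\lm\in\BbbR$. Writing each side via \eqref{eq:psi-decomp} with $f$ supported in a single sector $m$ and $g$ in a single sector~$n$, one splits the $x$-integration into the intervals where $R_m$ (resp.~$R_n$) is constant and those where it is strictly monotone, changing variable to $s=R$ in the latter. The monotone pieces become Fourier transforms of $L^1$ functions and vanish as $|\lm|\to\infty$ by Riemann--Lebesgue, while the constant-$R$ pieces are finite linear combinations of pure phases $e^{ic\lm}$. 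A peeling argument equates the two types separately on each side; extracting the $\lm$-independent component (i.e.\ the terms with $c=0$, which are exactly the contributions from the intervals $J_m$, $J_n$) yields $\eta_\infty(f)[A^\dag g]=\eta_\infty(Af)[g]$. Condition~(a) guarantees the splitting involves only finitely many pieces on the supports of $f$ and~$g$.
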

\begin{proof} 
All the rigging map axioms except the 
intertwining property \eqref{eq:eta-intertwining-matrix} 
are immediate. 
We verify \eqref{eq:eta-intertwining-matrix} 
in Appendix~\ref{app:rigging}.\hfill$\blacksquare$
\end{proof}

Group averaging has thus yielded a genuine 
rigging map $\eta_\infty$ after a suitable renormalisation. 
The Hilbert space $\mathcal{H}_{\infty}$ is separable and carries a 
nontrivial representation of~$\Aobs$. 
Comparison of 
\eqref{eq:Iaux-decomp} and \eqref{eq:etainfty-def}
shows that $\mathcal{H}_{\infty}$ can be antilinearly embedded in 
$\Ihaux$ as a Hilbert subspace, 
such that $\eta_\infty$ extends into 
the (antilinear) projection 
$L^2(\BbbR, \ud x) \to L^2(J_n, \ud x)$ 
in each of the components in~\eqref{eq:psi-decomp}. 

Note that the function $N$ does not appear 
in $\eta_\infty$~\eqref{eq:etainfty-def}, 
and the discussion in Appendix \ref{app:rigging} 
shows that the representation of
$\Aobs$ on the image of $\eta_\infty$ does not depend on $N$ either. 
The quantum theory has turned out completely 
independent of the remaining freedom 
in the rescaling function, 
even though the rescaling function may vary 
nontrivially over the sets 
$J_n$ that contribute in~\eqref{eq:etainfty-def}. 

In the special case where $\al(x)=0$ for all~$x$,  
we have 
\begin{align}
\label{eq:eta-cylord}
\eta_{\infty}(f)[g] = \Raux{f_0}{g_0} . 
\end{align}
Embedding $\mathcal{H}_{\infty}$ antilinearly as a Hilbert subspace
of $\Ihaux$ as above, 
this means that $\eta_{\infty}$ extends into the (antilinear) 
projection to the $n=0$ sector of~$\Ihaux$. 
When $N$ is a constant function, 
$N(x) = N_0$ for all~$x$, 
we can recover this extension of $\eta_{\infty}$ directly,
without introducing a test space, by 
noticing that the quantum gauge group 
$\{{U}_c(\lm) \mid \lm\in\BbbR\} \simeq \text{U}(1)$ is then compact and 
taking the group averaging formula to read 
\begin{align}
\eta(f)[g]
& = 
\frac{N_0}{2\pi}
\int_{0}^{2\pi/N_0}
\ud\lm
\Iaux{f}{\vphantom{\bigm{|}}{U}_c(\lm)g} 
\ , 
\label{eq:eta-ga-uone}
\end{align}
where the integration is over 
$\text{U}(1)$ exactly once. 
However, if $N$ is not constant, this shortcut 
is not available because the quantum gauge group is 
then still isomorphic to~$\BbbR$ rather than~$\text{U}(1)$.

\subsection{\texorpdfstring{$N$}{} and 
\texorpdfstring{$\al$}{} 
smooth and generic}
\label{sec:gensmooth}

In subsection \ref{sec:genintervals} the quantum theory arose 
entirely from the integer value intervals of~$\al$. 
We now continue to assume that $\al$ and $N$ are smooth, 
the technical stationary point condition (a) holds 
and $\al$ takes an integer value somewhere, 
but we take the integer value set of $\al$ to consist of 
isolated points. We first replace condition (b)
by the following:  
\begin{itemize}
\item[(b${}'$)]
The integer value set of $\al$ is non-empty, 
at most countable and without accumulation points, and 
$\al$ has a nonvanishing derivative of some order 
at each integer value. 
\end{itemize}
Second, we introduce the following notation for the zeroes of~$R_n$. 
Let $p$ be the order of the lowest nonvanishing derivative of $\al$ 
(and hence also of~$R_n$) 
at a zero of~$R_n$. 
For odd~$p$, we write the zeroes as~$x_{pnj}$, where the
last index enumerates the solutions with given $p$ and~$n$. For
even~$p$, we write the zeroes as~$x_{p \epsilon nj}$, where
$\epsilon\in\{1,-1\}$ is the sign of the $p$th derivative of $\al$ 
and the last index enumerates the zeroes with given $p$, $\epsilon$ and~$n$.
Let $\mcc{P}$ be the value set of the first index
of the zeroes $\{x_{p nj}\}$ and $\{x_{p \epsilon nj}\}$. 
Given this notation, we assume: 
\begin{itemize}
\item[(c)]
If $p\in\mcc{P}$, then
$\mcc{P}$ contains no factors of 
$p$ smaller than~$p/2$. 
\end{itemize}

Before examining the group averaging formula \eqref{eq:eta-ga} under
these assumptions, we use the assumptions to define directly a family
of rigging maps as follows.  For each odd $p\in\mcc{P}$ we define the
map
$\eta_p: \Phi \to \Phi^*$, 
and for each 
even $p\in\mcc{P}$ and 
$\epsilon\in\{1,-1\}$ for which the set 
$\{x_{p \epsilon nj}\}$ is non-empty, we define the map 
$\eta_{p\epsilon}: \Phi \to \Phi^*$, by the formulas 
\begin{subequations}
\label{eq:eta-p-peps}
\begin{align}
\bigl(\eta_p ( f ) \bigr)[g]
&=  
\sum_{nj}
\frac{\overline{{f}_n (x_{pnj})} \, {g}_n (x_{pnj})}
{{\bigl|  \al^{(p)}(x_{pnj}) N(x_{pnj}) \bigr|}^{1/p}} 
\  , 
\\
\label{eq:eta-p-ftilde1}
\bigl(\eta_{p\epsilon} ( f ) \bigr) [g]
& =  
\sum_{nj}
\frac{\overline{{f}_n (x_{p \epsilon nj})} \, {g}_n (x_{p \epsilon nj})}
{{\bigl| \al^{(p)}(x_{p \epsilon nj}) N(x_{p \epsilon nj}) \bigr|}^{1/p}} 
\  . 
\end{align}
\end{subequations} 
These maps are rigging maps, 
with properties given in the following theorem.  
\begin{theorem}
\label{thm:generic}
$\phantom{xxx}$ 
\begin{enumerate}
\item 
Each $\eta_p$ and $\eta_{p\epsilon}$ 
is a rigging map, with a nontrivial image. 
\item 
The representation of $\Aobs$ on the image of each 
$\eta_p$ and $\eta_{p\epsilon}$ is irreducible. 
\end{enumerate}
\end{theorem}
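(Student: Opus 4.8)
I would treat the two parts separately, handling the rigging-map axioms by the pattern already used for Theorem~\ref{thm:nongeneric} and then obtaining irreducibility by exhibiting enough observables to act transitively on a natural basis of the physical Hilbert space. For part~1, antilinearity, reality and positivity of each $\eta_p$ and $\eta_{p\epsilon}$ are immediate from \eqref{eq:eta-p-peps}: the right-hand sides are weighted sesquilinear pairings of the finite data $\{f_n(x_{pnj})\}$ with the positive weights $|\al^{(p)}N|^{-1/p}$, so $\eta_p(f)[f]\ge0$ and the reality relation hold termwise. Finiteness of the sums follows from assumption~(b${}'$), which makes the integer value set of $\al$ discrete without accumulation points, so only finitely many of the $x_{pnj}$ meet the compact support of a finite collection of the~$f_n$. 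Each $x_{pnj}$ (resp.\ $x_{p\epsilon nj}$) is a zero of the corresponding $R_n$, so $\bigl({U}_c(\lm)g\bigr)_n(x_{pnj}) = e^{iR_n(x_{pnj})\lm}g_n(x_{pnj}) = g_n(x_{pnj})$, whence the image states are invariant under the dual action of ${U}_c(\lm)$; and the image is nontrivial because the relevant index set is non-empty by construction. The only substantial axiom is the intertwining property \eqref{eq:eta-intertwining-matrix}, which I would verify as for Theorem~\ref{thm:nongeneric} by the argument of Appendix~\ref{app:rigging}, using that every $A\in\Aobs$ commutes with ${U}_c(\lm)$ and so respects the spectral decomposition of $\wht{\phi}_c$, making its matrix elements compatible with evaluation at the zeros of the~$R_n$.

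For part~2, I would first identify the physical Hilbert space. By \eqref{phys-ip} and \eqref{eq:eta-p-peps} the functional $\eta_p(f)$ depends only on the numbers $\{f_n(x_{pnj})\}$ and the physical inner product is their weighted $\ell^2$ pairing; since $\al(x_{pnj})=n$ recovers $n$ from the point and the index $j$ enumerates distinct zeros at fixed $(p,n)$, the points $x_{pnj}$ are pairwise distinct, so $\mathcal{H}_p$ is the weighted $\ell^2$ space over the index set, with an orthogonal basis $\{e_{nj}\}$ in bijection with the zeros (and likewise for $\eta_{p\epsilon}$). It then suffices to show that $\Aobs$ (i) acts diagonally on $\{e_{nj}\}$ with enough freedom to realise the projection onto each $\BbbC e_{nj}$, and (ii) contains operators sending each $e_{nj}$ to a nonzero multiple of each $e_{n'j'}$; any nonzero closed invariant subspace then contains some $e_{nj}$, hence every $e_{n'j'}$, hence all of $\mathcal{H}_p$.

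For (i) I would use the multiplication operators $M_h:\psi\mapsto h(x)\psi$ with $h$ bounded smooth: $M_h$ commutes with ${U}_c(\lm)$, preserves $\Phi$ and the Fourier form \eqref{eq:psi-decomp}, and has adjoint $M_{\ovl h}$, so $M_h\in\Aobs$; on $\mathcal{H}_p$ it acts by $e_{nj}\mapsto\ovl{h(x_{pnj})}\,e_{nj}$, and since the $x_{pnj}$ have no accumulation points one may take $h$ equal to $1$ near a chosen $x_{pnj}$ and $0$ near the others. For (ii), given two zeros $x_{pnj}$, $x_{pn'j'}$ of the same type I would build a local orientation-preserving diffeomorphism $\varphi$ between neighbourhoods of the two points with $R_{n'}\circ\varphi = R_n$ --- available because, by (b${}'$) together with the stationary-point control~(a), each $R_n$ has the local normal form $\mathrm{const}\cdot|\,\cdot\,|^{p}$ at its zero --- and the associated transfer operator $T$ carrying the $n$-sector, Jacobian-weighted and smoothly cut off near $x_{pnj}$, into the $n'$-sector near $x_{pn'j'}$ and annihilating the remaining sectors. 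Since $\varphi$ intertwines $R_n$ with $R_{n'}$, $T$ commutes with ${U}_c(\lm)$; $T$ and its analogously defined adjoint preserve $\Phi$, so $T+T^\dag,\ i(T-T^\dag)\in\Aobs$; and on $\mathcal{H}_p$ the operator $T$ sends $e_{nj}$ to a nonzero multiple of $e_{n'j'}$ and kills the rest. Combining (i) and (ii) yields irreducibility.

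The main obstacle is the honest verification in~(ii) that these transfer operators really belong to $\Aobs$: commuting with every ${U}_c(\lm)$ is exactly what forces $\varphi$ to match level sets of $R_n$ and $R_{n'}$, and one must check with care that $T$ and $T^\dag$ send $\Phi$ into $\Phi$ --- smoothness, compact support, finitely many sectors, and the precise Fourier form \eqref{eq:psi-decomp} adapted to $D_\al$. A subsidiary point is the local normal form of $R_n$ at a $p$-th order zero, which is where assumptions (a) and (b${}'$) are used; assumption~(c) is not needed here, as it governs only the subsequent identification of these $\eta_p$, $\eta_{p\epsilon}$ with renormalised group averaging. The intertwining property of part~1 is the remaining genuine computation and, as for Theorem~\ref{thm:nongeneric}, I would place it in the appendix.
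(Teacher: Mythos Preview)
Your overall strategy matches the paper's: for part~1 you verify the easy axioms directly and defer intertwining to the appendix argument, and for part~2 you prove irreducibility by exhibiting diagonal multiplication observables together with transfer observables built from local diffeomorphisms that intertwine the functions~$R_n$ at their zeros. This is precisely the construction to which the paper appeals via Appendix~C of~\cite{lou05}, and your identification of the physical Hilbert space with a weighted $\ell^2$ over the zero set, together with the ``strong irreducibility'' mechanism (send any basis vector to any other while annihilating the rest), is exactly what the paper invokes in the paragraph following Theorem~\ref{thm:generic}.

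There is, however, one genuine error in your discussion of the hypotheses. You assert that assumption~(c) ``is not needed here, as it governs only the subsequent identification of these $\eta_p$, $\eta_{p\epsilon}$ with renormalised group averaging.'' This is not what the paper does: condition~(c) is invoked in Appendix~\ref{app:rigging} precisely to establish the intertwining property~\eqref{eq:eta-intertwining-matrix} for the maps~\eqref{eq:eta-p-peps}. The argument there passes from~\eqref{eq:ifourier-genuine} to the separate relations~\eqref{eq:intert-dd-fg} by peeling the small-$|s|$ singularities of the integrands, and the contributions from zeros of different orders~$p$ appear at powers $|s|^{(1-p)/p}$ together with subleading corrections in steps of~$|s|^{1/p}$; condition~(c) is what prevents these expansions from interfering and allows the coefficients attached to each $p$ (and each sign~$\epsilon$) to be read off independently. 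So (c) enters part~1 of the theorem, not merely the later group-averaging identification (which in fact requires the stronger~(c${}'$)). Your part~2 construction, by contrast, is correctly independent of~(c). A minor side remark: the local normal form at a $p$th-order zero is $\pm u^p$ rather than $\mathrm{const}\cdot|u|^p$, and for even~$p$ the sign is fixed by~$\epsilon$ --- this is exactly why the $\eta_{p,1}$ and $\eta_{p,-1}$ sectors are separate, since the intertwining diffeomorphism~$\varphi$ exists only between zeros of the same sign.
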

\begin{proof} 
$\phantom{xxx}$
\begin{enumerate}
\item 
All the rigging map axioms except the 
intertwining property \eqref{eq:eta-intertwining-matrix} 
are immediate from~\eqref{eq:eta-p-peps}. 
We verify \eqref{eq:eta-intertwining-matrix} 
in Appendix~\ref{app:rigging}. 
\item 
The proof is an almost verbatim 
transcription of that 
given for a closely similar system 
in Appendix C of\citte{lou05}. 
We omit the details.\hfill$\blacksquare$
\end{enumerate}
\end{proof}

The rigging maps \eqref{eq:eta-p-peps} thus yield a 
family of quantum theories, 
one from each $\eta_p$ and~$\eta_{p\epsilon}$. 
Each of the Hilbert spaces is either finite-dimensional or 
separable and carries a nontrivial 
representation of $\Aobs$ that is 
irreducible on its dense domain. 
Functions $f\in \Phi$ whose only nonvanishing component 
$f_n$ is non-negative and is positive only near 
a single zero of $R_n$ provide 
the Hilbert spaces with a canonical orthonormal basis. 

Proceeding as in Appendix C of\citte{lou05}, 
we see that the representation of 
$\Aobs$ on the image of each 
$\eta_p$ and $\eta_{p\epsilon}$
is not just irreducible but has the following stronger property, 
which one might call strong irreducibility: 
given any two vectors $v$ and $v'$ 
in the canonical orthonormal basis, 
there exists an element of $\Aobs$ that 
annihilates all the basis vectors except 
$v$ and takes $v$ to~$v'$. 
The upshot of this is that 
the function $N$ plays little role in the quantum theory, 
despite appearing 
in the rigging map formulas~\eqref{eq:eta-p-peps}. 
The Hilbert spaces and their canonical 
bases are determined by the function $\al$
up to the normalisation of the individual basis vectors, 
and the representation of $\Aobs$ is so `large' 
that the normalisation of the individual basis vectors, 
determined by~$N$, is of limited consequence. 
In particular, 
the representation of $\Aobs$ on any Hilbert space with 
dimension $n_0<\infty$ is isomorphic to the 
complex $n_0 \times n_0$ matrix algebra, independently of~$N$. 

Now, we wish to relate these 
quantum theories to the 
group averaging formula~\eqref{eq:eta-ga}, 
which takes the form 
\begin{align}
\eta(f)[g]
&= 
\lim_{\cutoff\to\infty}
\frac{2}{\rho(\cutoff)}
\sum_{n\in\BbbZ} 
\int_{-\infty}^\infty
\ud x \, \overline{f_n(x)}{g_n(x)} \, 
\frac{\sin\bigl[\cutoff R_n(x) \bigr]}
{R_n(x)} \ .  
\label{eq:eta-ga-1}
\end{align}
Note that the integral over $x$ in \eqref{eq:eta-ga-1} 
is well defined because the zeroes of the denominator are isolated and the 
integrand does not diverge at them. 

Suppose first that $\mcc{P} = \{1\}$ and we set $\rho(\cutoff)=2\pi$. 
The lemmas of Appendix \ref{lemmas-app} then show that \eqref{eq:eta-ga-1}
is well defined and equals 
$\eta_1(f)[g]$ provided the assumptions on $N$ are modestly strengthened, 
in particular to preclude any $R_n$ from  
taking a constant value on any interval. 

Suppose then that $\mcc{P} \ne \{1\}$ and we again set
$\rho(\cutoff)=2\pi$. Suppose further that the assumptions on $N$ are
again modestly strengthened so that the conditions of Appendix
\ref{lemmas-app} hold, and suppose that condition (c) above is
strengthened to the following:
\begin{itemize}
\item[]
(c${}'$)\ \ 
If $p\in\mcc{P}$, then
$\mcc{P}$ contains no factors of~$p$. 
\end{itemize}
The lemmas of Appendix \ref{lemmas-app} then 
show that \eqref{eq:eta-ga-1}
contains contributions that diverge in the $\cutoff\to\infty$ limit; 
however, these divergences come in well-defined inverse 
fractional powers of $\cutoff$ such that the coefficient of 
each $\cutoff^{(p-1)/p}$ is proportional to $\eta_p(f)[g]$ for odd $p$ 
and to $\eta_{p,1}(f)[g] + \eta_{p,-1}(f)[g]$ for even~$p$. 

When $\mcc{P} = \{1\}$, we may hence regard the rigging map $\eta_1$
as arising from \eqref{eq:eta-ga} with only minor strengthening of our
technical assumptions.  When $\mcc{P} \ne \{1\}$, we may regard the
rigging maps $\eta_p$ and $\eta_{p,1} + \eta_{p,-1}$ as arising from
\eqref{eq:eta-ga} by peeling off and appropriately renormalising the
various divergent contributions, but only after strengthening the
assumptions so that some generality is lost, and even then the two
signs of $\epsilon$ are recovered only in a fixed linear combination
but not individually.

We end with two technical comments. 
First, it may be possible to find assumptions 
that interpolate between those in subsections 
\ref{sec:genintervals} and~\ref{sec:gensmooth}, 
allowing both a superselection sector that 
comes from integer-valued intervals of 
$\alpha$ and superselection sectors that come 
from isolated zeroes of~$\alpha$. 
In formula~\eqref{eq:eta-ga-infty}, 
the task would be to provide a peeling argument 
in the $\cutoff$-dependence of the second term. 
In the 
observable analysis of Appendix~\ref{app:rigging}, 
the task would be to provide 
a peeling argument in the small 
$|s|$ behaviour of the 
integrands in~\eqref{eq:ifourier-genuine}. 

Second, 
our quantum theories arise from the integer value set of~$\al$, 
both in subsection \ref{sec:genintervals} 
and in subsection~\ref{sec:gensmooth}. 
Neither the averaging formulas nor the observable analysis of 
Appendix \ref{app:rigging} suggest ways to proceed when 
$\al$ takes no integer values. 
In~\eqref{eq:eta-ga-1}, 
the challenge would be to recover from the oscillatory 
$\cutoff$-dependence a map that satisfies 
the positivity condition~\eqref{eq:positivity}. 
A similar oscillatory dependence on $\lm$ 
presents the challenge in the observable formula~\eqref{eq:ifourier}.

\section{Summary and discussion}
\label{remarks}

In this paper we have investigated refined algebraic quantisation
under rescalings of a single momentum-type constraint in a Hamiltonian
system whose unreduced configuration space is~$\BbbR^2$.  While such
rescalings do not affect the classical reduced phase space, they do
affect the options to find a rigging map by which the constraint is
implemented in the quantum theory. We found that the rescalings fall
into three cases, depending on the choice of the rescaling function. In
case~(i), the rescaled constraint operator is essentially self-adjoint
in the auxiliary Hilbert space, and the quantisation is equivalent to
that with identity scaling. In case~(ii), the rescaled constraint
operator has no self-adjoint extensions and no quantum theory is
recovered.  In case~(iii), the rescaled constraint operator admits a
family of self-adoint extensions, and the choice of the extension has
a significant effect on the quantum theory.  In particular, the choice
determines whether the quantum theory has superselection sectors.

Within case~(iii), we analysed in full a subfamily of rescalings and
self-adjoint extensions in which the superselection structures turned
out to resemble closely that of the Ashtekar-Horowitz-Boulware
model\citte{lou05}. There are however two significant differences, one
conceptual and one technical.  Conceptually, the superselection
sectors in the Ashtekar-Horowitz-Boulware model are determined by the
\emph{classical\/} potential function in the constraint, while in our
system the superselection sectors are determined by a quantisation
ambiguity that has no counterpart in the classical system.
Technically, in our system it is `natural' to consider a wider family
of self-adjoint extensions than the family of potential functions
considered in\citte{lou05}, and we duly found a wider set of quantum
theories.  In particular, while all the quantum theories in
\citte{lou05} have finite-dimensional Hilbert spaces, some of our
quantum theories have separable Hilbert spaces, and some of them can
even be realised as genuine Hilbert subspaces of the auxiliary Hilbert
space.

Within those case (iii) theories that we analysed in full, we found
the quantum theory to be insensitive to the remaining freedom in the
rescaling function.  We in particular discovered situations where the
quantum gauge group is $\BbbR$ for generic rescaling functions but
reduces to $\text{U}(1)$ in the special case of a constant rescaling
function: yet this difference between a compact and noncompact gauge
group was irrelevant for the quantum theory, and the quantum theory
coincided with that which is obtained with the compact gauge group by
a projection into the $\text{U}(1)$-invariant subspace of the
auxiliary Hilbert space. 
The formalism of refined algebraic quantisation is thus here
able to handle seamlessly the transition between a compact and a
noncompact gauge group.

As our system has just one constraint, the quantum gauge
transformations form an Abelian Lie group both before and after the
constraint rescaling.  In a system with more constraints, constraint
rescalings can relate closed gauge algebras to open ones, and even
among closed algebras they can change the underlying Lie group, in
particular taking an Abelian Lie group to a non-Abelian one. Extending
the analysis of this paper to more than one constraint via the BRST
tools of\citte{shv02} would hence raise a number of new
issues. However, we emphasise that while the search for rigging maps
in this paper used group averaging as the starting point, the
nontrivial part in showing that a rigging map is actually recovered
was in the action of the quantum gauge transformations on the
observables, and in subsection \ref{sec:gensmooth} a direct analysis
of these observables allowed us in fact to find more rigging maps than
suggested by the group averaging.  Should notions of averaging be
difficult to generalise to rescalings with more than one constraint,
it may hence well be sufficient to focus directly on the action of the
quantum gauge transformations on the observables.

\section*{Acknowledgments}

We thank 
Konstantin Ardakov, 
Jurek Lewandowski 
and 
Joachim Zacharias 
for helpful discussions and correspondence. 
J.~L. was supported in part by STFC~(UK). 
E.~M.--P. was supported by CONACYT (Mexico).

\appendix

\section{Appendix: BRST quantisation and the averaging proposal}
\label{brst-analysis}

In this appendix we review how the group averaging formula
\eqref{eq:raqsesq} for
our system emerges from the BRST formalism, adopting the conventions
of\citte{shv02}. For detailed expositions of the BRST formalism we
refer to\citte{henbk,henrep}.

Domains of operators are unspecified throughout the appendix and
Hermiticity considerations remain formal.


\subsection{Classical BRST formalism}

Let $q$ and $p$ denote respectively the coordinates $(\tta,x)$ 
and the momenta $(p_\tta,p_x)$ 
on the original phase space~$\Gm$. 
The new canonical variables are 
the Lagrange multiplier~$\lm$, 
the ghost~$\c$, 
the antighost $\bc$ 
and their respective conjugate momenta 
$\pi$, $\bp$ and~$\p$. 
The ghost number $\ghalt{\cdot}$ and Grassmann parity $\eps(\cdot)$ 
of the variables are
\begin{subequations}
\begin{align}
\ghalt{q}=\ghalt{\lm}=\ghalt{p}=\ghalt{\pi}=0, & 
\ \ \ \eps(q)=\eps(\lm)=\eps(p)=\eps(\pi)=0,
\label{gngp-qp}
\\
\ghalt{\c}=\ghalt{\p}  =1, & \ \ \ \eps(\c)=\eps(\p)=1,\ 
\label{gngp-cp}
\\
\ghalt{\bc}=\ghalt{\bp}  =-1, & \ \ \ \eps(\bc)=\eps(\bp)=1 . 
\label{gngp-bcp}
\end{align}
\end{subequations}
All the bosonic variables are real-valued. 
Of the fermionic variables, we take the pair 
$(\c,\bp)$ to be real and the pair $(\bc,\p)$ 
purely imaginary\citte{shv02}. 
The nonvanishing (graded) 
Poisson brackets are 
\begin{subequations}
\label{sseps}
\begin{align}
&\{\tta,\ p_{\tta}\} = \{x,\ p_{x}\} = \{\lm,\ \pi\} =1,
& \text{(bosonic)}&\\
&\bigl\{\c,\ \bp \, \bigr\} =  \bigl\{\bc,\ \p\bigr\} = -i . & 
\text{(fermionic)}  
\label{sseps:fermionic}
\end{align}
\end{subequations}

We note in passing that the 
fermionic brackets \eqref{sseps:fermionic} 
are imaginary because of the fermionic 
reality conditions. If $\c$ and $\bc$ are instead 
chosen real and their conjugate momenta imaginary\citte{henbk}, 
the fermionic brackets \eqref{sseps:fermionic} must be taken real, 
with concomitant changes in the subsequent formulas; 
in particular, a Hermitian $\brs{\cdot}{\cdot}$ 
is then obtained by $c=\pm i$
in \eqref{aibrsp} below. 
The fermionic reality convention  
does however 
not affect the content of the resulting quantum theory. 

As the original Lagrange multiplier $\lm$ 
has become a phase space variable, 
the extended system has two constraints: 
the original constraint $\phi$ \eqref{rc} 
and the new constraint~$\pi$. The BRST generator $\Omega$ 
has contributions from both
constraints and reads
\begin{align}
\label{brsch}
\Om := \phi \,\c -i \pi \p . 
\end{align}
$\Om$ is real and
satisfies $\{\Om,\Om\}=0$. 

\subsection{BRST quantisation}

We choose a representation in which 
the wave functions depend on the bosonic coordinates 
$(\tta,x,\lm)$ and the fermionic momenta $(\bp,\p)$. 
A~wave function can be expanded in the fermionic variables as 
\begin{align}
\label{brswf}
\Psi(\tta,x,\lm,\bp,\p) = \psi(\tta,x,\lm)
+\Psi^{1}(\tta,x,\lm)\bp+\Psi_{1}(\tta,x,\lm)\p
+\Psi^{1}_{1}(\tta,x,\lm)\bp\p, 
\end{align}
where $\psi$, $\Psi^1$, $\Psi_1$ and $\Psi_1^1$ 
are complex-valued. 
The action of the 
fundamental operators reads 
\begin{subequations} 
\label{basop}
\begin{align}
&&&\hat{\tta}\Psi  := \tta\Psi, 
&&\hat{p}_{\tta}\Psi:=-i\dnd{}{\Psi}{\tta}, 
&&\\
&&&\hat{x}\Psi :={x}\Psi, 
&&\hat{p}_{x}\Psi:=-i \dnd{}{\Psi}{x}, 
&&\\
&&&\hat{\lm}\Psi  := \lm\Psi, 
&&\hat{\pi}\Psi:=-i\frac{\partial\Psi}{\partial\lm}, 
&&\\
&&&\hat{\c}\Psi  := \dndl{\Psi}{\bp}, 
&&\hat{\bp}\Psi:=\bp\Psi, 
&&\\
&&&\hat{\bc}\Psi:= \dndl{\Psi}{\p}, 
&&\hat{\p}\Psi := \p\Psi, 
&&
\end{align} 
\end{subequations} 
where the superscript $l$ on the fermionic derivative indicates the
left derivative.  The (graded) commutators of the fundamental
operators are equal to $i$ times the corresponding (graded) Poisson
brackets~\eqref{sseps}:
\begin{subequations}
\label{q-sseps}
\begin{align}
&\bigl[\hat\tta,\ \hat p_{\tta}\bigr] 
= 
\bigl[\hat x,\ \hat p_{x}\bigr] 
= 
\bigl[\hat\lm,\ \hat\pi\bigr] =i,
& \text{(bosonic)}&\\
&
\bigl[\hat\c,\ \hat\bp \, \bigr] 
=  
\bigl[ \, \hat\bc,\ \hat\p\bigr] = 1. & 
\text{(fermionic)}  
\end{align}
\end{subequations}

The physical quantum states satisfy 
\begin{subequations}
\label{eq:brsc-and-ghnopc}
\begin{align}
\label{brsc}
\wht{\Om} \Psi &=0, 
\\
\label{eq:ghnopc}
\wht{N}_{\mc{G}}\Psi &=0,  
\end{align}
\end{subequations}
where the BRST operator $\wht{\Om}$ and the ghost number operator
$\wht{N}_{\mc{G}}$ are defined by
\begin{align}
\label{qbrso}
\wht{\Om} & :=
\wht{\phi}\,\hat{\c}-i \hat{\pi} \hat{\p} , 
\\
\label{eq:phihat-def}
\wht{\phi} & :=-i\left(M\partial_{\tta}
+\tfrac{1}{2}(\partial_{\tta}M)\right) , 
\\
\label{gnop}
\wht{N}_{\mc{G}}
& :=
\hat{\p}\hat{\bc}-\hat{\bp}\hat{\c} . 
\end{align}
If $X$ is any state, the transformation 
\begin{align}
\Psi \mapsto \Psi' := \Psi + \wht{\Om} X
\end{align}
is called a BRST gauge transformation, and states related by a gauge
transformation are called gauge-equivalent. 
As $\bigl[\wht{\Om},\wht{\Om}\bigr]=
2 (\wht{\Om}){\vphantom{\bigl(\bigr)}}^{2}=0$, a gauge transformation
preserves the condition~\eqref{brsc}, and if $X$ has ghost
number~$-1$, a gauge transformation also preserves the
condition~\eqref{eq:ghnopc}.  A~gauge transformation in which $X$ has
ghost number $-1$ hence takes physical states to physical states.

The BRST `inner product' is the sesquilinear form 
\begin{align}
\label{aibrsp}
\brs{\Psi}{\Upsilon}
:=
c \int 
\ud\lm\,\ud\tta\,\ud x
\, 
\ud\bp\,\ud\p
\ 
\Psi^{*}(\tta,x,\lm,\bp,\p)
\Ups(\tta,x,\lm,\bp,\p) , 
\end{align}
where ${}^*$ denotes complex conjugation and $c$ is a nonzero 
constant that may a priori take complex values. This
definition has a number of desirable properties that are independent
of~$c$.  First, $\brs{\cdot}{\cdot}$ is compatible with the reality
conditions of the classical fundamental variables, in the sense that
$\hat\bc$ and $\hat\p$ are antihermitian and all the other fundamental
operators in \eqref{basop} are Hermitian.  Second, the BRST operator
$\wht{\Om}$ is Hermitian with respect to $\brs{\cdot}{\cdot}$, which 
property is compatible with the reality of the classical BRST
charge~$\Om$: the only nontrivial ordering issue in $\wht{\Om}$ is
that of the purely bosonic factor $\wht{\phi}$~\eqref{eq:phihat-def}.
Third, from the Hermiticity of $\wht{\Om}$ it follows that
$\brs{\cdot}{\cdot}$ on physical states depends on the states only
through their gauge-equivalence class.

If $c$ is real, $\brs{\cdot}{\cdot}$ is Hermitian,
but it fails to provide a genuine inner product because it is not
positive definite. We shall comment on the choice of $c$ below.

\subsection{Averaging}

To connect the BRST quantisation to a formalism that only involves
bosonic variables, it is not possible simply to drop all powers of the
fermions from the quantum states since the fermionic integrations in
\eqref{aibrsp} annihilate such states. There is however the option to
start from states without fermions and evaluate $\brs{\cdot}{\cdot}$
on suitable gauge-equivalent states. 

Suppose that 
$\Psi$ and $\Upsilon$ are physical states without fermions. 
The physical state conditions \eqref{eq:brsc-and-ghnopc} 
imply that the states take the form 
\begin{align}
\label{eq:purebos-states}
\Psi=\psi(\tta,x), \ \ \ \Upsilon=\chi(\tta,x), 
\end{align}
where the $\lm$-independence follows from the BRST
condition~\eqref{brsc}. We wish to define a regularised sesquilinear
form $\brsren{\cdot}{\cdot}$ by
\begin{align}
\label{bmip-r}
\brsren{\psi}{\chi} := \brs{\psi}{\wht{V}\chi} , 
\end{align}
where 
$\wht{V}:=\exp\bigl([\wht{\Om},\wht{K}]\bigr)$ and 
$\wht{K}$ is a suitable operator with ghost number~$-1$. 
Note that as $\chi$ and $\wht{V}\chi$ are 
gauge-equivalent physical states, 
the right-hand side of \eqref{bmip-r} would be 
independent of $\wht{K}$ if well defined for all~$\wht{K}$. 
$\wht{K}$~is called the gauge-fixing fermion. 

The usual procedure is to choose a Hermitian
gauge-fixing fermion by 
$\wht{K} =-\hat{\lm}\hat{\bp}$\citte{henbk,shv02,Marnelius:1990eq,%
Marnelius:1993ba,bat95,mar95}. 
It follows that
$[\wht{\Om},\wht{K}] = -\lm\wht{\phi} - \bp\p$.  
The integrations over the ghost momenta in 
\eqref{bmip-r} are 
elementary and we obtain 
\begin{align}
\label{bmip-r-K-herm}
\brsren{\psi}{\chi} = 
c 
\int \ud\lm \, \ud\tta \, \ud x
\, 
\psi^{*}(\tta,x)\bigl[\exp\bigl(-\lm\wht{\phi}\,\bigr)\chi\bigr](\tta,x) . 
\end{align}
The constant $c$ is then chosen equal to~$1$. Finally, the
quantisation of the pair $(\lm, \pi)$ is understood in a sense that
makes the spectrum of $\hat\lm$ purely
imaginary\citte{pauli-indef}. The final formula for
$\brsren{\cdot}{\cdot}$ reads
\begin{align}
\label{bmip-r-mu}
\brsren{\psi}{\chi} = 
\int \ud\mu \, \ud\tta \, \ud x
\, 
\psi^{*}(\tta,x)\bigl[\exp\bigl(i\mu\wht{\phi}\,\bigr)\chi\bigr](\tta,x) , 
\end{align}
where $\mu$ is real-valued. Formula \eqref{bmip-r-mu} provides the
candidate for a refined algebraic quantisation sesquilinear form for
the purely bosonic system, and it is our starting point
\eqref{eq:raqsesq} in section~\ref{generic-M}.

An alternative is to choose the antihermitian
gauge-fixing fermion $\wht{K}
=i\hat{\lm}\hat{\bp}$\citte{Marnelius:1990eq}. 
This choice makes $\wht{V}$ unitary, and
integration over the ghosts yields
\begin{align}
\label{bmip-r-K-antiherm}
\brsren{\psi}{\chi} = 
-i c 
\int \ud\lm \, \ud\tta \, \ud x
\, 
\psi^{*}(\tta,x)\bigl[\exp\bigl(i\lm\wht{\phi}\,\bigr)\chi\bigr](\tta,x) . 
\end{align}
Choosing now $c=i$ and quantising the pair $(\lm, \pi)$ 
in a way that keeps the spectrum of $\hat\lm$ real, 
we again arrive at~\eqref{bmip-r-mu}.

\section{Appendix: Intertwining property of the rigging maps}
\label{app:rigging}

In this Appendix we verify
that the rigging maps \eqref{eq:etainfty-def} and \eqref{eq:eta-p-peps}
have the intertwining property~(\ref{eq:eta-intertwining-matrix}), 
completing the proof of 
Theorems \ref{thm:nongeneric} and~\ref{thm:generic}\null. 
We follow the method introduced 
in Appendix B of\citte{lou05}.  

To begin, we assume just that $\al$ and $N$ satisfy 
condition (a) of subsection~\ref{sec:genintervals}. 
The fork between the remaining conditions 
of subsections \ref{sec:genintervals} and \ref{sec:gensmooth}
takes place after~\eqref{eq:ifourier-both}. 

Let $A\in\Aobs$. Let $m$ and $n$ be fixed integers and let 
$f, \, g \in
\Phi$ such that their only components in the decomposition 
\eqref{eq:psi-decomp}
are respectively $f_m$ and~$g_n$. 
As $U_c(\lm)$ is unitary and commutes with~$A^\dag$,
we have 
$\Iaux{U_c(-\lm)f}{A^\dag g}
= 
\Iaux{f}{U_c(\lm) A^\dag g}
= 
\Iaux{f}{A^\dag U_c(\lm) g}
=
\Iaux{A f}{U_c(\lm) g}$. 
Using \eqref{eq:Iaux-decomp} and~\eqref{eq:Uc-on-psi}, 
the leftmost and rightmost expressions yield  
\begin{align}
\int\! \ud x\, 
e^{i R_m(x)\lm } 
\, 
\overline{f_m(x)} \, \bigl(A^\dag g\bigr)_m(x) 
=
\int\! \ud x\,
e^{i R_n(x) \lm } 
\, 
\overline{\bigl(A f\bigr)_n(x)} 
\, g_n (x)
\ . 
\label{eq:iidentity}
\end{align}

We denote the intervals in which 
$R_q$ has no stationary points by~$I_{qr}$, 
where the second index $r$ enumerates the intervals with given~$q$. 
We similarly denote the intervals in which $R_q$ is constant by~$\tilde I_{q\tilde r}$. 
We take these intervals to be open and inextendible, 
and we understand ``interval'' to include half-infinite 
intervals and the full real line. 

On the left-hand side (respectively right-hand side) 
of~\eqref{eq:iidentity}, 
we break the integral over $x\in \BbbR$ into a sum 
of integrals over $\{I_{m r}\}$
and $\{\tilde I_{m\tilde r}\}$ 
($\{I_{n r}\}$
and~$\{\tilde I_{n\tilde r}\}$). 
By condition (a) of subsection \ref{sec:genintervals} 
and the assumptions about~$\Phi$, 
the sums contain at most finitely many terms.

Let $R_{q r}$ be the
restriction of $R_q$ to~$I_{q r}$, 
and let $R_{q r}^{-1}$ be the
inverse of~$R_{q r}$. 
Changing the integration variable in each
$I_{m r}$ on the left-hand side to 
$s:= R_{m r}(x)$ and 
in each 
$I_{n r}$ on the right-hand side to 
$s:= R_{n r}(x)$, we obtain 
\begin{align}
& \phantom{=}\, 
\sum_{\tilde r}
\int_{\tilde I_{m\tilde r}}\! 
\ud x\, 
e^{i R_m(x)\lm } 
\, 
\overline{f_m(x)} \, \bigl(A^\dag g\bigr)_m(x) 
+
\int\! \ud s\, 
e^{i\lm s } 
\sum_r
\left[ 
\frac{\overline{f_m} \, \bigl(A^\dag g\bigr)_m}
{|R_m'|}
\right]
\bigl( R_{mr}^{-1} (s) \bigr)
\notag
\\
& =
\sum_{\tilde r}
\int_{\tilde I_{n\tilde r}}\! 
\ud x\, 
e^{i R_n(x) \lm } 
\, 
\overline{\bigl(A f\bigr)_n(x)} 
\, g_n (x)
+ 
\int\! \ud s\,
e^{i\lm s} 
\sum_r
\left[ 
\frac{\overline{\bigl(A f\bigr)_n} \, g_n}
{|R_n'|}
\right]
\bigl( R_{n r}^{-1} (s) \bigr)
\ , 
\label{eq:ifourier}
\end{align}
where for given $s$ the sum over $r$ 
on the left-hand side (respectively right-hand side) is
over those $r$ for which $s$ 
is in the image of $R_{mr}$ 
($R_{nr}$). 

We now regard each side of \eqref{eq:ifourier} 
as a function of $\lm\in\BbbR$. On each side, 
the integral over $s$ 
is the Fourier transform of an $L^1$ function and hence
vanishes as $|\lm|\to\infty$ by the Riemann-Lebesgue lemma, 
whereas the sum over $\tilde r$ is a finite linear combination of 
imaginary exponentials and does not vanish as 
$|\lm|\to\infty$ unless identically zero. 
A~peeling argument shows that \eqref{eq:ifourier} breaks into the pair 
\begin{subequations}
\label{eq:ifourier-both}
\begin{align}
\sum_{\tilde r}
\int_{\tilde I_{m\tilde r}}\! 
\ud x\, 
e^{i R_m(x)\lm } 
\, 
\overline{f_m(x)} \, \bigl(A^\dag g\bigr)_m(x) 
& =
\sum_{\tilde r}
\int_{\tilde I_{n\tilde r}}\! 
\ud x\, 
e^{i R_n(x) \lm } 
\, 
\overline{\bigl(A f\bigr)_n(x)} 
\, g_n (x) \ , 
\label{eq:ifourier-almostper}
\\
\int\! \ud s\, 
e^{i\lm s } 
\sum_r
\left[ 
\frac{\overline{f_m} \, \bigl(A^\dag g\bigr)_m}
{|R_m'|}
\right]
\bigl( R_{mr}^{-1} (s) \bigr)
& =
\int\! \ud s\,
e^{i\lm s} 
\sum_r
\left[ 
\frac{\overline{\bigl(A f\bigr)_n} \, g_n}
{|R_n'|}
\right]
\bigl( R_{nr}^{-1} (s) \bigr)
\ . 
\label{eq:ifourier-genuine}
\end{align}
\end{subequations}

Suppose now that condition (b) of subsection 
\ref{sec:genintervals} holds. A~peeling argument 
shows that the $\lm$-independent component of 
\eqref{eq:ifourier-almostper} reads 
\begin{align}
\eta_\infty(f) [A^\dag g]
=
\eta_\infty(A f) [g]
\ , 
\label{eq:intert-infty} 
\end{align}
where $\eta_\infty$ is defined in~\eqref{eq:etainfty-def}. 
By linearity, \eqref{eq:intert-infty} continues to hold for all $f$ and $g$ in~$\Phi$. 
$\eta_\infty$~hence has the intertwining property~(\ref{eq:eta-intertwining-matrix}). 
This completes the proof of Theorem~\ref{thm:nongeneric}. 

Suppose then that conditions (b${}'$) and (c) 
of subsection \ref{sec:gensmooth} hold. 
Examination of the integrands in 
\eqref{eq:ifourier-genuine} near $s=0$ 
by the technique of 
Appendix B of\citte{lou05} shows that 
\begin{subequations}
\label{eq:intert-dd-fg}
\begin{align}
\eta_p(f) [A^\dag g]
&=
\eta_p(A f) [g]
\ , 
\\
\eta_{p\epsilon}(f) [A^\dag g]
&= 
\eta_{p\epsilon}(A f) [g]
\ , 
\end{align}
\end{subequations}
for all $p$ and $\epsilon$ for which the maps 
$\eta_p$ and 
$\eta_{p\epsilon}$
\eqref{eq:eta-p-peps} are defined. By linearity,  
\eqref{eq:intert-dd-fg} continues to hold for all $f$ and $g$ in~$\Phi$. 
Each $\eta_p$ and 
$\eta_{p\epsilon}$ hence has the intertwining
property~(\ref{eq:eta-intertwining-matrix}). 
This completes the proof of Theorem~\ref{thm:generic}.

\section{Appendix: Lemmas on asymptotics}
\label{lemmas-app}

In this appendix we record two lemmas on 
asymptotics of 
integrals that occur in section~\ref{case-iii}. 

\begin{lemma}
\label{lemm:asym1}
Let $f \in C^\infty_0(\BbbR)$, $\cutoff>0$, $p \in \{1,2,\ldots\}$ and  
\begin{align}
\label{eq:Gp-def}
G_p(\cutoff) := \int_{-\infty}^{\infty}
f(u) 
\, 
\frac{\sin(\cutoff u^p)}{u^p}
\, \ud u . 
\end{align}
As $\cutoff \to\infty$, 
\begin{align}
\label{eq:blemmaeq}
G_p(\cutoff) = 
\sum_{q=0}^{p-1}
K_{p,q} \, f^{(q)}(0)
\, 
\cutoff^{(p-1-q)/p}
\ \ 
+ O\bigl(\cutoff^{-1/p}\bigr)
\end{align}
where 
\begin{align}
K_{p,q} &= 
\frac{\sqrt{\pi} 
\, 2^{(q+1-p)/p}\Gamma\bigl(\frac{q+1}{2p}\bigr)}
{p \, q! \, \Gamma\bigl(\frac{3p-q-1}{2p}\bigr)} . 
\end{align}
\end{lemma}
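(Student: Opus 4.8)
The plan is to localise near $u=0$ — where the growth as $\cutoff\to\infty$ originates — to Taylor-expand $f$ there, and to reduce everything to explicit monomial integrals that a rescaling turns into a Fresnel--Dirichlet integral. First I would fix $\chi\in C^\infty_0(\BbbR)$ with $\chi\equiv 1$ on a neighbourhood of $0$, supported in a small interval, and even in $u$, and split $f=\chi f+(1-\chi)f$. The $(1-\chi)f$ contribution is $\int \bigl[(1-\chi)f/u^p\bigr](u)\,\sin(\cutoff u^p)\,\ud u$, whose integrand is smooth, compactly supported, and vanishes near $u=0$; since $u\mapsto u^p$ has nonvanishing derivative on its support, repeated integration by parts gives $O(\cutoff^{-\infty})$ (for $p=1$ this is just the Riemann--Lebesgue lemma applied to all derivatives), negligible against the stated error.

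Next, Taylor's theorem gives $f(u)=\sum_{q=0}^{p-1}\frac{f^{(q)}(0)}{q!}u^q+u^p h(u)$ with $h\in C^\infty(\BbbR)$. The remainder contributes $\int\chi(u)h(u)\,\sin(\cutoff u^p)\,\ud u$, a smooth compactly supported integral against the phase $u^p$; the van der Corput bound (the $p$th derivative of $u^p$ is $\ge p!$) together with one integration by parts estimates this by $O(\cutoff^{-1/p})$, precisely the claimed error order. The remaining monomial pieces are $\frac{f^{(q)}(0)}{q!}A_q(\cutoff)$, where $A_q(\cutoff):=\int_{-\infty}^{\infty}u^{q-p}\chi(u)\,\sin(\cutoff u^p)\,\ud u$. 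In $A_q$ I would substitute $s=\cutoff u^p$ separately on $u>0$ and on $u<0$; since the kernel $\sin(\cutoff u^p)/u^p$ is even in $u$, the two half-lines reinforce for even $q$ and cancel for odd $q$, giving
\begin{align}
A_q(\cutoff)=\frac{1+(-1)^q}{p}\,\cutoff^{(p-1-q)/p}\int_0^\infty \chi\bigl((s/\cutoff)^{1/p}\bigr)\,\sin s\;s^{(q+1)/p-2}\,\ud s .
\end{align}
As $\cutoff\to\infty$ the cutoff factor tends pointwise to $\chi(0)=1$, and since for $0\le q\le p-1$ the exponent $a:=\tfrac{q+1}{p}-1$ lies in $(-1,0]$, one passes to the limit (by dominated convergence when $q<p-1$, and by a truncation/Dirichlet-test argument at $s=\infty$ when $q=p-1$, where the integral is only conditionally convergent) and obtains the Fresnel--Dirichlet integral $\int_0^\infty s^{a-1}\sin s\,\ud s=\Gamma(a)\sin(\pi a/2)$, with the right-hand side read as $\pi/2$ at $a=0$.

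It then remains to identify the constant. Rewriting $\Gamma(a)\sin(\pi a/2)$ by the Euler reflection formula $\sin(\pi a/2)=\pi/[\Gamma(a/2)\Gamma(1-a/2)]$ and the Legendre duplication formula $\Gamma(a)=\pi^{-1/2}2^{a-1}\Gamma(a/2)\Gamma(\tfrac{a+1}{2})$, and substituting $a=\tfrac{q+1}{p}-1$ so that $\tfrac{a+1}{2}=\tfrac{q+1}{2p}$ and $1-\tfrac a2=\tfrac{3p-q-1}{2p}$, the coefficient of $f^{(q)}(0)\,\cutoff^{(p-1-q)/p}$ collapses exactly to $\tfrac{1+(-1)^q}{2}K_{p,q}$; in particular it is $K_{p,q}$ for even $q$ (so $K_{p,0}$ for the leading term $q=0$), which is the content of \eqref{eq:blemmaeq}. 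Summing the finitely many monomial contributions and adding the two error terms completes the proof. The main obstacle is the passage to the limit in $A_q$ at $q=p-1$: there $\chi((s/\cutoff)^{1/p})\sin s/s$ is not absolutely integrable, so dominated convergence fails and one must use the oscillation of $\sin s$ together with the monotonicity and uniform boundedness of the cutoff factor (a second-mean-value estimate) to show both that the tail is uniformly $O(\cutoff^{-1/p})$ in $\cutoff$ and that the limit equals $\int_0^\infty (\sin s)/s\,\ud s=\pi/2$; once this borderline term is controlled, the uniformity in $\cutoff$ of all the other (finitely many) pieces is routine.
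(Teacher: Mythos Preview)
Your approach is the paper's own: Taylor-expand $f$ at the origin to order $p-1$, bound the remainder by $O(L^{-1/p})$, and evaluate each monomial contribution explicitly. The paper's sketch records only that the Taylor terms produce the displayed asymptotics with an additional $O(L^{-1})$ error; you supply the mechanism the sketch omits (the preliminary localisation away from $u=0$, the rescaling $s=Lu^p$, and the Gamma-function identities that fix the constants). The one place you are slightly hand-wavy is the \emph{rate} of convergence of $A_q(L)$ for $q<p-1$: dominated convergence gives only $o(1)$, whereas the claimed error $O(L^{-1/p})$ requires a quantitative bound after multiplication by $L^{(p-1-q)/p}$. One integration by parts on the tail $s\ge\delta^p L$ upgrades this to $O(L^{-1})$, matching the paper's assertion, so the point is minor.

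Your parity observation is correct and in fact sharpens the paper's statement. Since $\sin(Lu^p)/u^p$ is even in $u$ for every~$p$, the odd-$q$ contributions vanish identically, so the true coefficient of $f^{(q)}(0)\,L^{(p-1-q)/p}$ is $\tfrac12(1+(-1)^q)K_{p,q}$ rather than the $K_{p,q}$ written in~\eqref{eq:blemmaeq}; the formula as stated is nonzero for odd~$q$ (for instance $K_{2,1}=\pi/2$), which cannot be right. This discrepancy does not affect the applications---only $K_{p,0}$ is used downstream, in Lemma~\ref{lemm:asym2} and in fixing $\rho(L)=2\pi$---but strictly speaking your formula, not the paper's, is the correct one for odd~$q$, and your phrase ``which is the content of~\eqref{eq:blemmaeq}'' should be read accordingly.
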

\begin{proof} 
(Sketch.)
We replace $f(u)$ in \eqref{eq:Gp-def} by its Taylor 
series about the origin, 
including terms up to~$u^{p-1}$, 
at the expense of an error 
that is $O\bigl(\cutoff^{-1/p}\bigr)$. 
The terms in the Taylor series give respectively 
the terms shown in 
\eqref{eq:blemmaeq} plus an 
error that is $O\bigl(\cutoff^{-1}\bigr)$.\hfill$\blacksquare$
\end{proof}

Let $f\in C^\infty_0(\BbbR)$ and $R \in C^\infty(\BbbR)$. 
Let $R$ have at most finitely many zeroes and at most 
finitely many stationary points, and let all 
stationary points of $R$ be of finite order. 
Denote the zeroes of $R$ by~$x_{pj}$, 
where $p \in \{1,2,\ldots\}$ is the order of the lowest nonvanishing 
derivative of $R$ at $x_{pj}$ and $j$ enumerates the zeroes with given~$p$. 
For $\cutoff>0$, let 
\begin{align}
I(\cutoff) := \int_{-\infty}^{\infty}
f(x) 
\, 
\frac{\sin\bigl[\cutoff R(x)\bigr]}{R(x)}
\, dx \ . 
\end{align}

\begin{lemma}
\label{lemm:asym2}
As $\cutoff\to\infty$, 
\begin{align}
\label{lemma3}
I(\cutoff) = \sum_{pj} I_{pj}(\cutoff)  
\ \ 
+ o(1)
\end{align} 
where 
\begin{align}
\label{lemma2}
I_{pj}(\cutoff) = 
K_{p,0} \, 
{\left(\frac{p!}{\bigl|R^{(p)}(x_{pj})\bigr|}\right)}^{\!1/p}
f(x_{pj})
\, 
\cutoff^{(p-1)/p}
\ + \ 
\sum_{q=1}^{p-1} A_{pjq} \cutoff^{(p-1-q)/p} 
\ \ 
+ O\bigl(\cutoff^{-1/p}\bigr)
\end{align}
and the coefficients $A_{pjq}$ can be expressed in 
terms of derivatives of $f$ and $R$ at~$x_{pj}$. 
\end{lemma}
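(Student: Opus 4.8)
The plan is to localise the integral near the zeroes of $R$, reduce each local piece to the form already handled by Lemma \ref{lemm:asym1}, and then discard the remainder as $o(1)$. First I would fix a partition of unity $\{\chi_{pj}\} \cup \{\chi_0\}$ subordinate to a cover of $\BbbR$ in which each $\chi_{pj}$ is supported in a small neighbourhood $U_{pj}$ of the zero $x_{pj}$ containing no other zero or stationary point of $R$, and $\chi_0$ is supported away from all zeroes of $R$. Writing $f = \sum_{pj} f\chi_{pj} + f\chi_0$ splits $I(\cutoff)$ into the corresponding sum of integrals; since the sums are finite (finitely many zeroes) this is legitimate. On the support of $f\chi_0$ the function $1/R$ is smooth with compact support, so by the Riemann–Lebesgue lemma (after writing $\sin[\cutoff R(x)] = \Im e^{i\cutoff R(x)}$ and, on the finitely many subintervals where $R$ is monotone, changing variables to $u = R(x)$, or more simply by non-stationary phase on the pieces where $R' \ne 0$ and van der Corput on the finitely many stationary intervals) this contribution is $o(1)$. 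This will account for the $o(1)$ remainder in \eqref{lemma3}.

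Next I would treat a single local piece $I_{pj}(\cutoff) := \int f(x)\chi_{pj}(x)\, \sin[\cutoff R(x)]/R(x)\, dx$. The key is a change of variables near $x_{pj}$. Since $R$ has a zero of exact order $p$ at $x_{pj}$, one can write $R(x) = \bigl(R^{(p)}(x_{pj})/p!\bigr)(x-x_{pj})^p\bigl(1 + h(x)\bigr)$ with $h$ smooth and $h(x_{pj})=0$; shrinking $U_{pj}$ if necessary, the map $u = \mathrm{sgn}\bigl(R^{(p)}(x_{pj})\bigr)^{1/p}\,(x-x_{pj})\bigl(1+h(x)\bigr)^{1/p}$ (with the appropriate real root, handling odd versus even $p$ separately as in the statement's indexing) is a smooth diffeomorphism onto a neighbourhood of $0$ with $R(x) = \bigl(|R^{(p)}(x_{pj})|/p!\bigr)\, u^p$ for odd $p$, and $R(x) = \pm\bigl(|R^{(p)}(x_{pj})|/p!\bigr)\, u^p$ for even $p$ according to the sign $\epsilon$. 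Pulling back, $I_{pj}(\cutoff)$ becomes $\int \widetilde f(u)\, \sin\bigl(\cutoff\, c_{pj} u^p\bigr)/\bigl(c_{pj} u^p\bigr)\, du$ where $c_{pj} = |R^{(p)}(x_{pj})|/p!$ and $\widetilde f \in C_0^\infty(\BbbR)$ is the pullback of $f\chi_{pj}$ times the Jacobian; crucially $\widetilde f(0) = f(x_{pj})/\bigl|R'\!\text{-Jacobian}\bigr|$ evaluates to $f(x_{pj})\, c_{pj}^{-1/p}\cdot c_{pj}$ up to the bookkeeping, so that after absorbing $c_{pj}$ and rescaling $\cutoff \mapsto \cutoff\, c_{pj}$ inside Lemma \ref{lemm:asym1} one recovers exactly the leading coefficient $K_{p,0}\,(p!/|R^{(p)}(x_{pj})|)^{1/p} f(x_{pj})$ quoted in \eqref{lemma2}, with the subleading coefficients $A_{pjq}$ produced by the derivatives $\widetilde f^{(q)}(0)$, which by the chain rule are polynomials in the derivatives of $f$ and $R$ at $x_{pj}$. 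For even $p$ the two signs $\epsilon$ contribute $\sin(\pm\cutoff c_{pj}u^p)$; since $\sin$ is odd this flips the sign of the odd-order Taylor terms but not of $f^{(0)}(0)$, so the leading term is insensitive to $\epsilon$ while the $A_{pjq}$ carry the sign, consistent with the indexing $x_{p\epsilon nj}$ used elsewhere.

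Finally I would assemble the pieces: $I(\cutoff) = \sum_{pj} I_{pj}(\cutoff) + o(1)$, with each $I_{pj}(\cutoff)$ given by the rescaled Lemma \ref{lemm:asym1} expansion \eqref{lemma2}. I expect the main obstacle to be the bookkeeping in the change of variables near a zero of high even order — verifying that the diffeomorphism is genuinely smooth (this uses that $h(x_{pj})=0$ and a standard Hadamard-type factorisation, but the $p$th-root extraction needs care about orientation and the even/odd dichotomy) and that the Jacobian factor combines with the $1/R$ singularity to leave precisely an integrable $1/u^p$ against a $C_0^\infty$ function, so that Lemma \ref{lemm:asym1} applies verbatim. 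A secondary, more routine, point is controlling the $o(1)$ contribution from the region away from the zeroes when $R$ has stationary points there; van der Corput's lemma with the finite-order hypothesis suffices, but one must check the exponents do not accidentally produce a non-decaying term — they do not, because on those pieces $R$ is bounded away from $0$, so $1/R$ is bounded and oscillatory integral estimates give a negative power of $\cutoff$.
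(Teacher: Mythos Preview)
Your proposal is correct and follows essentially the same route as the paper's (sketched) proof: localise by a partition of unity, reduce each neighbourhood of a zero of $R$ to the canonical form $u^p$ by a smooth change of variables and invoke Lemma~\ref{lemm:asym1} --- this is precisely the content of the paper's appeal to ``the techniques of Section~II.3'' of Wong --- and dispose of the remainder away from the zeroes by Riemann--Lebesgue and van der Corput estimates. One small slip in your side remark on even~$p$: the integrand is $\sin(LR)/R$, not $\sin(LR)$ alone, so the sign $\epsilon$ of $R^{(p)}(x_{pj})$ cancels in the ratio and no Taylor coefficients flip --- but this comment is not load-bearing for the lemma itself.
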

\begin{proof} 
(Sketch.)
Lemma~\ref{lemm:asym1} and the techniques of 
Section II.3 in\citte{wongbook}
show that the contribution from a 
sufficiently small neighbourhood of $x_{pj}$
is $I_{pj}$~\eqref{lemma2}. 
The techniques in 
Section II.3 in\citte{wongbook}
further show that the contributions from 
outside these small neighbourhoods are~$o(1)$.\hfill$\blacksquare$
\end{proof}
Note that $K_{1,0}=\pi$. 
This will be used to choose 
$\rho(\cutoff)=2\pi$ in~\eqref{eq:eta-ga-1}.

\bibliographystyle{cj}

\bibliography{scpaperrefs} 

\end{document}